\documentclass[prd,aps,amsfonts,eqsecnum,superscriptaddress,nofootinbib,notitlepage,longbibliography,final]{revtex4-1}

\usepackage[english]{babel}
\usepackage[utf8]{inputenc}

\usepackage[a4paper, left=2cm, right=2cm, top=2cm, bottom=2cm]{geometry}

\usepackage{amsmath,physics,amsfonts}
\usepackage{graphicx,subcaption}
\usepackage[colorlinks=true, allcolors=blue]{hyperref}
\usepackage{float}
\usepackage{hhline}
\usepackage{mathtools}
\mathtoolsset{showonlyrefs}
\usepackage{soul}
\usepackage{tcolorbox}

\usepackage{pgfplots}
\pgfplotsset{compat=newest}

\usepackage{algorithm}
\usepackage{algpseudocode}

\usepgfplotslibrary{groupplots}

\definecolor{zero}{RGB}{180, 180, 180}
\definecolor{one}{RGB}{206, 104, 104}
\definecolor{two}{RGB}{231, 195, 146}
\definecolor{three}{RGB}{231, 217, 146}
\definecolor{four}{RGB}{209, 223, 141}
\definecolor{five}{RGB}{116, 185, 116}
\definecolor{six}{RGB}{126, 147, 165}
\definecolor{seven}{RGB}{146, 136, 176}
\definecolor{eight}{RGB}{166, 124, 166}

\usepackage{amsthm}

\newtheorem{theorem}{Theorem}[section]
\newtheorem{corollary}{Corollary}[theorem]
\newtheorem{lemma}[theorem]{Lemma}
\newtheorem{prop}[theorem]{Proposition}

\newtheorem{example}[theorem]{Example}
\theoremstyle{definition}
\newtheorem{definition}[theorem]{Definition}
\newtheorem{remark}[theorem]{Remark}

\renewcommand\labelenumi{(\roman{enumi})}
\renewcommand\theenumi\labelenumi

\newcommand{\oscnorm}[1]{{\left\vert\kern-0.25ex\left\vert\kern-0.25ex\left\vert #1 
    \right\vert\kern-0.25ex\right\vert\kern-0.25ex\right\vert}}

\begin{document}

\title{Constant-time equilibration of observables under rapid Lindbladian dynamics}

\author{Štěpán Šmíd}
\email{s (dot) smid23 (at) imperial (dot) ac (dot) uk}
\affiliation{%
Department of Computing, Imperial College London, United Kingdom
}%
\author{Richard Meister}
\affiliation{%
Department of Computing, Imperial College London, United Kingdom
}%
\author{Mario Berta}
\affiliation{%
Institute for Quantum Information, RWTH Aachen University, Germany
}%
\affiliation{%
Department of Computing, Imperial College London, United Kingdom
}%
\author{Roberto Bondesan}
\affiliation{%
Department of Computing, Imperial College London, United Kingdom
}%

\date{\today}


\begin{abstract}
Markovian open-system dynamics have widespread applications throughout quantum information science, including algorithmic state preparation. Their convergence is commonly quantified using the worst case global trace distance between the evolving and stationary states. However, this criterion can be unnecessarily stringent when only physically relevant observables are of interest. Here we introduce and study observable-specific mixing times. We prove that, for quasi-local, rapidly mixing Lindbladians, sums of geometrically local observables equilibrate in a time independent of system size, in contrast to the logarithmic dependence of global state mixing. This separation reduces the runtime of dissipative quantum algorithms, including quantum Gibbs samplers, for estimating quantities such as the Gibbs state energy and local order parameters, yielding an overall scaling that is linear in system size. Complementing this quantum result, we develop a quantum-inspired classical algorithm for estimating the same quantities. Its runtime is likewise linear in system size, but scaling exponentially in $\mathcal{O}\big(\log(1/\epsilon)^D\big)$, where $D$ denotes the spatial dimension of the lattice. We further analyse non-interacting Lindbladians over qudits, fermions, and bosons, demonstrating that locality of observables is not always necessary for a qualitatively faster mixing. Small-scale simulations of quantum Gibbs samplers reveal no large hidden constants in our asymptotic analysis and show that the theoretical predictions closely capture the finite-size dynamics. 
\end{abstract}

\maketitle

\section{Overview}

\paragraph*{Open quantum systems and dissipative state preparation.}
Time evolution of a quantum system is the most fundamental process in quantum mechanics. For a closed system, this evolution is invertible and generated by the system's Hamiltonian. But for open quantum systems, one allows the system to also interact with an external heat bath, enabling the dissipation of information from the system into the environment. This interaction is commonly assumed to be weak and memoryless, which is known as the Markovian setting. The study of such evolutions dates back to Lindblad \cite{lindblad1976generators} and Davies \cite{davies1974markovian} in the context of quantum thermalisation. Crucially, unlike unitary evolution, dissipative evolution generated by a Lindbladian $\mathcal{L}$ can converge towards a steady state. We refer the reader to \cite{Mozgunov2020completelypositive, PhysRevB.102.115109, scandi2026thermalization} for modern explanations of quantum master equations. The study of open quantum systems has come a long way, and they have been in recent years widely considered as a powerful tool in the context of algorithmic state preparation \cite{lin2025dissipative}. Seminal works on quantum Gibbs sampling \cite{chen2025efficient,ding2025efficient, gilyen2026quantum} and ground state preparation \cite{PhysRevResearch.6.033147, ding2025end} allow one to construct an artificial open system dynamics converging towards desired steady states and efficiently simulate it on a quantum computer. These often directly simulate an exactly detailed-balanced Lindbladian evolution, but some may also invoke a simplified approximate scheme \cite{ding2025end, fang2026quantum}. In either case, the complexity of the state preparation will depend on the convergence time of the evolution, commonly quantified using the worst case global trace distance from the steady state.
In this work, we will be mainly concerned with the following, practically-motivated, question: \begin{align}
    \textit{Do we need full state convergence in order to measure physically-relevant observables?}
\end{align} We will show that the answer in many cases is actually negative, and that relevant properties can mix qualitatively faster than the full state.\\

\enlargethispage{1cm}
\paragraph*{Main results.}
To study this notion of mixing, we start by defining the mixing time for an observable $O$ (Definition \ref{def:mixing time of observables}) as
\begin{align}
        t_{\textup{mix}}^{(O)}(\epsilon) &= \inf\left\{ t\geq 0 \left|\  \|e^{t\mathcal{L}}[O] - \Tr(O \cdot \sigma) \cdot I\| \leq\right. \epsilon \cdot \|O\| \right\}\,,
\end{align} where $\sigma$ is the unique steady state of the dynamics. Crucially, this notion is compatible with the global state mixing time as $t_\textup{mix}(\epsilon) = \sup_{O} t^{(O)}_\textup{mix}(\epsilon)$, and gives the same guaranteed upper bound on the error when measuring the expectation value of $O$ as $t_\textup{mix}(\epsilon)$, but still leaves the option for specific observables to mix faster than the full state. Our main result (Theorem \ref{thm: constant mixing of geometrically-local observables}) shows that there can often be a qualitative difference between the mixing of the states and that of observables: \begin{align}
    &\textit{For quasi-local and rapidly mixing Lindbladians, observables expressible as sums of geometrically-local}\\ 
    &\hspace{2.6cm} \textit{terms mix in a system-size-independent time, $t_\textup{mix}^{(O)}(\epsilon) =\mathcal{O}(\log(1/\epsilon))$.}
\end{align}
This contrasts the (poly-)logarithmic mixing time needed for a full state convergence under a rapidly mixing Lindbladian. For dissipative algorithms for state preparation, like quantum Gibbs samplers, this has immediate consequences on the end-to-end complexity of evaluating many physically-important properties of the systems, like the Gibbs state energy and local order parameters.
Theorem \ref{thm: constant mixing of geometrically-local observables} further allows us to decrease the complexity of simulating the Lindbladian for both quantum and classical approaches. Separating the evolution into local parts lowers the overall quantum complexity of estimating such expectation values to $\mathcal{O} \left(n\cdot \operatorname{poly}(1/\epsilon)\right)$. Similarly, we also devise a corresponding classical Algorithm~\ref{alg: Estimating expectation values} which estimates these expectation values in time $\mathcal{O} \left(n\cdot e^{\mathcal{O}\left(\log(1/\epsilon)^D\right)}\right)$, where $D$ is the lattice dimension. The quantum simulation then still provides a superpolynomial speed-up with respect to the desired error $\epsilon$ for any $D\geq 2$.

To develop deeper understanding of what makes some observables mix qualitatively faster than others, we further study several classes of non-interacting Lindbladians, including separable qudit Lindbladians, and quadratic fermionic and bosonic Lindbladians. For the separable qudits, we provide several large classes of observables mixing in constant time, as well as an example of a simple observable mixing in $\Theta(\log(n/\epsilon))$ time (Example \ref{example: observable mixing in log time}). For fermionic and bosonic Lindbladians, we show that quadratic quantum Gibbs samplers thermalise quadratic observables in a constant time. 
In all three cases, these classes include non-local observables, highlighting that locality is not strictly necessary for mixing times uniform in the system size and that one should study further structure of the observables.

Finally, we complement our theory with numerical simulations of quantum Gibbs samplers, comparing these different notions of mixing times exactly for the Fermi-Hubbard model and the transverse-field Ising model. Here we see that the qualitatively different scaling can make significant practical differences even for small lattice sizes.
\\

\paragraph*{Proof ideas.}
The main idea is based on a scaling argument combining rapid mixing together with Lieb-Robinson bounds for the Lindbladian evolution itself. While commonly studied for Hamiltonian time evolution \cite{hastings2010localityquantumsystems,Chen2023Speed}, Lieb-Robinson bounds can be often extended to open system dynamics \cite{Poulin2010, Barthel2012}, including that of long-range Lindbladians \cite{Nachtergaele2011, Sweke_2019}. Similarly to the unitary evolution, the bounds on the commutators allow one to localise such dynamics \cite{Barthel2012, cubitt2015stability, SIGAL2026170575}. 
We use this to study the contractivity of a local operator $O_A$ initially supported on a finite local region $A$ by splitting it into two contributions, one from a finite-sized neighbourhood of $A$ (which grows in time), and the other one from the potentially-infinite outside region, \begin{align}
    \left\| e^{t\mathcal{L}}[O_A] - \Tr(e^{t\mathcal{L}}[O_A]) \frac{I}{d^n} \right\| 
    &\leq  \left\| e^{t\mathcal{L}_{B_A(r)}}[O_A] - \Tr_{B_A(r)}(e^{t\mathcal{L}_{B_A(r)}}[O_A]) \frac{I}{d^{|{B_A(r)|}}} \right\| + 2  \left\| e^{t\mathcal{L}}[O_A] - e^{t\mathcal{L}_{B_A(r)}}[O_A]\right\|\\
    &\leq \operatorname{poly}(|B_A(r)|) \cdot e^{-\Delta t} \cdot \|O_A\|+2|A| \cdot J \cdot \frac{e^{vt}-1-vt}{v} \cdot e^{-\gamma r} \cdot \|O_A\|\,.
\end{align} We then make the contribution from the outside region sufficiently small by making the inner region sufficiently large, while the rapid mixing will cause the contribution from the inner region to decay faster than it grows from the increasing size. Importantly, the use of Lieb-Robinson bounds in this manner gets rid of any notion of system size, and hence we get a constant mixing time for the term $O_A$.
Remarkably, this shows that the operator $O_A$ mixes after only interacting with a finite portion of the Lindbladian. 
See Figure \ref{fig:lightcone splitting} for an illustration.
\enlargethispage{1cm}


\begin{figure}[H]
\centering
\begin{tikzpicture}[
    scale=0.62,
    site/.style={circle, fill=zero!85!black, inner sep=0pt, minimum size=2.6pt},
    lat/.style={zero!70, line width=0.3pt},
    every node/.style={font=\small},
  ]

  \def\Rin{0.42}      
  \def\Rr{2.0}        
  \def\Rout{4.45}     
  \def\Lhalf{3}       

  \begin{scope}

    \foreach \i in {0,...,13}{
      \pgfmathsetmacro{\ra}{\Rr + \i*0.11}
      \pgfmathsetmacro{\rb}{\Rr + (\i+1)*0.11}
      \pgfmathsetmacro{\op}{0.85*exp(-0.30*\i)}
      \fill[three, opacity=\op, even odd rule] (0,0) circle (\rb) (0,0) circle (\ra);
    }
    \fill[three, opacity=0.85, even odd rule] (0,0) circle (\Rr) (0,0) circle (\Rin);

    \foreach \i in {-\Lhalf,...,\Lhalf}{
      \draw[lat] (\i,-\Lhalf) -- (\i,\Lhalf);
      \draw[lat] (-\Lhalf,\i) -- (\Lhalf,\i);
    }
    \foreach \x in {-\Lhalf,...,\Lhalf}{\foreach \y in {-\Lhalf,...,\Lhalf}{%
      \node[site] at (\x,\y){};}}

    \fill[one] (0,0) circle (\Rin);
    \node[white] at (0,0) {$A$};

    \draw[-latex, line width=0.5pt, black!80] (62:\Rin) -- (62:\Rr);
    \node[anchor=south east, inner sep=1.5pt] at (62:1.2) {$r = \Theta(t)$};

    \node[anchor=north] at (0,-5.9) {(a)\ \ Inside $B_A(r)$};
  \end{scope}

  \begin{scope}[shift={(11.6,0)}]

    \foreach \i in {0,...,13}{
      \pgfmathsetmacro{\ra}{\Rout + \i*0.11}
      \pgfmathsetmacro{\rb}{\Rout + (\i+1)*0.11}
      \pgfmathsetmacro{\op}{0.85*exp(-0.30*\i)}
      \fill[three, opacity=\op, even odd rule] (0,0) circle (\rb) (0,0) circle (\ra);
    }
    \fill[three, opacity=0.85, even odd rule] (0,0) circle (\Rout) (0,0) circle (\Rr);

    \foreach \i in {-\Lhalf,...,\Lhalf}{
      \draw[lat] (\i,-\Lhalf) -- (\i,\Lhalf);
      \draw[lat] (-\Lhalf,\i) -- (\Lhalf,\i);
    }
    \foreach \x in {-\Lhalf,...,\Lhalf}{\foreach \y in {-\Lhalf,...,\Lhalf}{%
      \node[site] at (\x,\y){};}}

    \foreach \ang in {45,135,225,315}{
      \draw[-latex, line width=0.5pt, black!75] (\ang:1.05) -- (\ang:2.5);
    }

    \draw[black!60, dashed, line width=0.6pt] (0,0) circle (\Rr);
    \fill[one] (0,0) circle (\Rin);
    \node[white] at (0,0) {$A$};

    \node[anchor=north] at (0,-5.9) {(b)\ \ Outside $B_A(r)$};
  \end{scope}

\end{tikzpicture}
\caption{\textbf{Splitting the contractivity of a local term} $\boldsymbol{O_A}$.
\textbf{(a)} Inside the ball $B_A(r)$ the truncated dynamics $e^{t\mathcal{L}_{B_A(r)}}$
equilibrates $O_A$ rapidly, contributing
$\operatorname{poly}(|B_A(r)|)\cdot e^{-\Delta t}\cdot \|O_A\|$; the radius grows linearly with time, but
the exponential decay asymptotically beats the polynomial growth. 
\textbf{(b)} Outside $B_A(r)$ the dynamics is suppressed by the Lieb-Robinson bound, and is made small by the radius increasing with time. 
Neither contribution depends on the system size, so $O_A$ mixes after interacting with only a finite portion of
$\mathcal{L}$.}
\label{fig:lightcone splitting}
\end{figure}


Finally, we lift this result to arbitrary sums of geometrically-local observables $O=\sum_A O_A$ by understanding general behaviour of the observable-specific mixing time for a sum of operators (Lemma \ref{lemma: mixing time of a sum}) and using properties of geometrically-local operators studied in literature on learning quantum processes \cite{Huang2023, lewis2024improved}. In the subsequent study of non-interacting Lindbladians, we proceed by solving the time-evolution of observables in the Heisenberg picture exactly, and, for bosonic systems, further define a regularised notion of the mixing time for unbounded operators (Definition \ref{def:mixing time of bosonic observables}).
\\

\paragraph*{Techniques for bounding mixing times.}
Previous works have generally studied the convergence rates of quantum Markov semigroups through their global mixing times, defined via the trace distance between the evolved state and the steady state. Based on the scaling of this mixing time with respect to the size of the system, we generally distinguish three types of behaviour: \textit{slow mixing} in exponential time, \textit{fast mixing} in polynomial time, and \textit{rapid mixing} in polylogarithmic time. There are several different methods for studying these mixing times, one of the most approachable ones being analysing the spectral gap of the generator. To show fast mixing, it is sufficient to show that the spectral gap closes at most polynomially with respect to the system size. For quantum Gibbs samplers, this has been achieved at high temperatures \cite{rouze2024efficient, bergamaschi2026fast}, for weakly-interacting fermionic systems \cite{smid2025polynomial,tong2025fast}, one-dimensional systems \cite{bergamaschi2026chains}, simplified state preparation algorithms \cite{slezak2026polynomial}, the mean-field Heisenberg model \cite{basso2026spectral}, and for stabiliser Hamiltonians \cite{ding2026polynomial, paez2026efficient}. One can also obtain estimates on the spectral gap by generalising classical methods like hypocoercivity \cite{fang2025mixing} or replica exchange \cite{chen2025quantum}.
In order to show rapid mixing, one requires finer understanding of the dynamics, such as the one achieved with the \textit{modified logarithmic Sobolev inequality} for commuting Hamiltonians \cite{kastoryano2013quantum, capel2020modified, bardet2023rapid, kochanowski2025rapid, stengele2026Abelian, stengele2026modified}, or via the \textit{oscillator norm}, which was recently used to show rapid mixing at high temperatures \cite{Rouze2026}, and for weakly-interacting Gibbs states \cite{smid2025rapid} and ground states \cite{Zhan2026}. Other notable results on establishing rapid mixing also include the Dobrushin condition at high temperatures \cite{bakshi2026dobrushin, bakshi2026rapid}, and self-correcting quantum memories \cite{bergamaschi2026rapid}. All of these works hence bound the convergence in the worst case, but that can be a qualitative overestimate for many useful tasks, such as evaluating the Gibbs state energy and estimating the partition function.\\

\paragraph*{Related works on refined mixing times.}
We note that a similar notion to our observable-specific mixing time was also considered for the ground state preparation Lindbladian in \cite{Zhan2026}, where they've defined an energy-based mixing time. Their definition, however, differed from ours, requiring specific properties of the ground state preparation Lindbladian, and was mainly used as a proxy to the state mixing time for numerical simulations. 
Relevant ideas were also developed in \cite{Kashyap2025}, where the authors theoretically analyse a noisy analogue quantum simulator with tunable damping and weakly-coupled ancillae for geometrically-local Lindbladians. 
They consider the evolution of local observables supported on $\mathcal{O}(1)$ sites, and, while not explicitly studying their mixing time, show that such a simulator can estimate their expectation values in the steady state in a constant time. Moreover, they prove that their protocol is stable under local noise. Their analysis uses similar techniques as we do, combining rapid mixing with Lieb-Robinson bounds on the Lindbladian evolution, but doesn't cover sums of geometrically-local observables with global support, like the Hamiltonian of the system, or Lindbladians with quasi-local interactions. 
Interestingly, assuming that $\text{BQP}\neq\text{BPP}$, they show that no classical algorithm can do this task in time scaling polynomially in the inverse precision for systems in at least two dimensions, agreeing with the complexity of our Algorithm \ref{alg: Estimating expectation values} being quasi-polynomial in $1/\epsilon$ for any $D\geq 2$, while being polynomial in $D=1$.
\\

\paragraph*{Discussion.}
We have introduced a novel way of studying the convergence rate of open quantum systems. This has direct implications for end-to-end complexities of algorithmic tasks, as these require measuring interesting expectation values in the steady state. For rapidly mixing quantum Gibbs samplers \cite{Rouze2026, smid2025rapid, bakshi2026dobrushin, bakshi2026rapid} and ground state preparation \cite{Zhan2026}, our approach lowers the computational complexity for tasks such as evaluating the system's energy (Corollaries \ref{cor: constant mixing weakly-interacting Gibbs}, \ref{cor: constant mixing  high-temperature Gibbs}, \ref{cor: constant mixing ground states}), making these quantum algorithms even more competitive with their classical counterparts \cite{chen2025convergence, mann2026efficient, Alhambra_2021,bakshi2025hightemperaturegibbsstatesunentangled}.
These classical results are based on cluster (or cumulant) expansions and are provably efficient\,---\,i.e.~polynomial in both the system size $n$ and the inverse precision $1/\epsilon$\,---\,strictly only within the radius of convergence of these series.
This occurs at high temperatures or small interaction strengths.
Quantum advantage for evaluating geometrically-local properties, in the form of a superpolynomial speed-up in $1/\epsilon$ dependency with respect to our classical simulation algorithm runtime, is thus expected in the region of temperatures and couplings where the cluster expansion breaks down but rapid mixing persists.
This is also consistent with the hardness results of \cite{Kashyap2025}, which rule out a classical algorithm polynomial in $1/\epsilon$ and the inverse mixing rate for estimating the evolution of local observables under rapidly mixing Lindbladians.
This situation mimics the relationship between cluster expansion and MCMC approaches to the classical Ising models in two dimensions, where the cluster expansion breaks down before the critical temperature \cite{friedli2017statistical}, while Glauber dynamics remains rapidly mixing until the critical temperature \cite{martinelli1999glauber}.

In order to estimate an expectation value in a given state, we must execute many different shots of the measurement. It was shown in \cite{jiang2026predictingpropertiesquantumthermal, chen2026thermalexpectationestimationsingletrajectory} how to decrease this complexity by measuring expectation values from a single trajectory of the evolution, requiring an initial burn-in period of the length of the mixing time, but then only constant additional time for each new shot. Here we improve upon this by showing that the observable-specific mixing time is often constant itself, and so even this burn-in period is constant. Our work further highlights the importance of rapid mixing of open quantum systems, as it potentially allows significant qualitative improvements of the evolution time required for practical end-to-end tasks, decreasing it to a constant from as much as a polynomial time attained from only analysing the spectral gap of the generator.
\\

\paragraph*{Outlook.}
There are many potential directions for the study of observable-specific mixing times, such as showing the stability of constant mixing times under a perturbation of the Lindbladian, similar to the stability of expectation values themselves as studied in \cite{cubitt2015stability}. 
Since our results for non-interacting Lindbladians don't necessarily require the observables to be local, it is intriguing if this uniform mixing could be extended to weakly-interacting Lindbladians. 
It would also be interesting to see if one can bound the mixing times of relevant observables for Lindbladians whose full mixing times are not known, or if one could even show efficient mixing of some interesting observables for Lindbladians known to mix the full states slowly.
\\

\paragraph*{Manuscript.} The rest of the paper is divided as follows: In Section \ref{sec: observable-specific mixing times}, we define the notion of observable-specific mixing times. In Section \ref{sec: qudits}, we cover the results for qudit Lindbladians, including our main Theorem \ref{thm: constant mixing of geometrically-local observables}, classical simulability with Theorem \ref{thm: classical simulability}, and further intuition for separable Lindbladians in Subsection \ref{sec: separable qudits}. In Sections \ref{sec: fermions} and \ref{sec: bosons}, we provide results for non-interacting fermionic and bosonic systems respectively. Finally, we complement our theory with numerical simulations of the exact mixing properties in Section \ref{sec: numerics}.

\section{Observable-specific mixing times}\label{sec: observable-specific mixing times}

Open quantum system dynamics generated by a Lindbladian $\mathcal{L}^\dagger$ generalises Hamiltonian evolution by allowing Markovian dissipation of information into the environment. Unlike unitary dynamics generated by a Hamiltonian, Lindbladians generate quantum channels converging towards some steady state. Throughout this manuscript, we shall consider only those Lindbladians that generate an ergodic evolution always converging towards a unique steady state $\sigma$, which obeys $\mathcal{L}^\dagger [\sigma] = 0$ as the fixed point of the evolution. Note that $\mathcal{L}[I]=0$ provides the corresponding fixed point in the Heisenberg picture, ensuring that an evolved observable $O(t)=e^{t\mathcal{L}}[O]$ converges towards $\Tr(O\cdot \sigma)\cdot I$.
The rate of convergence of dissipative dynamics generated by such Lindbladians is then commonly studied through their mixing times:
\begin{definition}
    The mixing time of the Lindbladian $\mathcal{L}^\dagger$ is \begin{equation}
        t_\textup{mix}(\epsilon) = \inf \left\{ t \geq 0 \left|\, \forall \rho: \left\|e^{t\mathcal{L}^\dagger}[\rho] - \sigma\right\|_{\Tr} \leq \epsilon \right.\right\}\,,
    \end{equation} where $\|A\|_{\Tr}=\Tr(\sqrt{A^\dagger A}) = \sup\limits_{\|O\|\leq 1} |\Tr(O\cdot A)|$ denotes the trace norm. 
\end{definition} This definition naturally lends itself to also defining a mixing time from a specific initial state, $t_\textup{mix}^{(\rho)}(\epsilon)$, which is then clearly upper bounded by the full mixing time as $t_\textup{mix}(\epsilon) = \sup_\rho t_\textup{mix}^{(\rho)}(\epsilon)$. This general mixing time then provides a bound on the error of any measured expectation value, when initiating the dynamics in an arbitrary state, as \begin{equation}
    |\Tr\left(O \cdot (\rho(t_\textup{mix}(\epsilon)) - \sigma)\right)| \leq \|O\|\cdot \|\rho(t_\textup{mix}(\epsilon))-\sigma\|_{\Tr} \leq \epsilon \cdot \|O\|\,.
\end{equation} However, for applications in algorithmic state preparation, one is typically interested in measuring some specific, physically-relevant, expectation values in the steady state, which motivates a weaker notion of mixing.

In this work, we will be interested in mixing properties of specific observables. Corresponding to the previous bound on the difference of expectation values, we make the following definition of the mixing time for an observable $O$:
\begin{definition}\label{def:mixing time of observables}
    The mixing time of an observable $O$ under the evolution generated by $\mathcal{L}^\dagger$ and initiated in $\rho$ is \begin{align}
        t_{\textup{mix}}^{(O,\rho)}(\epsilon) &= \inf\left\{ t\geq 0 \left|\ \forall \tau \geq t: \left|\Tr\left(O \cdot (e^{\tau\mathcal{L}^\dagger}[\rho] - \sigma)\right)\right| \leq\right. \epsilon \cdot \|O\| \right\}\,,
    \end{align}
    while the mixing time of $O$ for an arbitrary initial state is
    \begin{align}
        t_{\textup{mix}}^{(O)}(\epsilon) = \sup_\rho  t_{\textup{mix}}^{(O,\rho)}(\epsilon)
        = \inf\left\{ t\geq 0 \left|\ \forall\rho, \tau \geq t: \left|\Tr\left(O \cdot (e^{\tau\mathcal{L}^\dagger}[\rho] - \sigma)\right)\right| \leq\right. \epsilon \cdot \|O\| \right\}\,.
    \end{align}
\end{definition} Note that the quantifier $\forall\tau\ge t$ in this definition is (a priori) necessary, as expectation values need not evolve monotonically for a given initial state. However, observe that \begin{align}\hspace{-1cm}
    \sup_\rho \left| \Tr\left(O \cdot (e^{t\mathcal{L}^\dagger}[\rho] - \sigma)\right)\right| &= \sup_\rho | \Tr\left(O(t) \cdot \rho - O \cdot \sigma\right)| = \max\{\lambda_\textup{max}(O(t)) - \Tr(O \cdot \sigma), \Tr(O \cdot \sigma) - \lambda_\textup{min}(O(t))\}\hspace{-1cm}\\
    &= \|O(t) - \Tr(O \cdot \sigma) \cdot I\|\,,
\end{align} and so the definition for an arbitrary initial state can be rewritten fully in the Heisenberg picture. Further, as $O(t)- \Tr(O \cdot \sigma) \cdot I = O_\perp(t)$, where $O_\perp = O - \Tr(O \cdot \sigma) \cdot I$ is the decaying part of $O$, and since the Lindbladian evolution in the Heisenberg picture is contractive with respect to the spectral norm, $\|O(t) - \Tr(O \cdot \sigma) \cdot I\|$ is actually a monotonically decreasing function of $t$, and hence we can simplify the definition as \begin{align}\label{def: simplified mixing time of O}
        t_{\textup{mix}}^{(O)}(\epsilon) &= \inf\left\{ t\geq 0 \left|\  \|e^{t\mathcal{L}}[O] - \Tr(O \cdot \sigma) \cdot I\| \leq\right. \epsilon \cdot \|O\| \right\}\,.
    \end{align}

By the motivating upper bound, we trivially have that $t_{\textup{mix}}^{(O)}(\epsilon) \leq t_{\textup{mix}}(\epsilon)$ for any observable $O$. In fact \begin{equation}t_\textup{mix}(\epsilon) = \sup_{O} t^{(O)}_\textup{mix}(\epsilon)\,,\end{equation} as we can express the trace norm in terms of expectation values, $\left\|e^{t\mathcal{L}^\dagger}[\rho] - \sigma\right\|_{\Tr} = \sup\limits_{\|O\| \leq 1} \left|\Tr\left(O \cdot (e^{t\mathcal{L}^\dagger}[\rho] - \sigma)\right)\right|$, and hence for any $t\geq \sup_{O} t^{(O)}_\textup{mix}(\epsilon)$, we get that \begin{equation}
    \left\|e^{t\mathcal{L}^\dagger}[\rho] - \sigma\right\|_{\Tr} = \sup_{\|O\| \leq 1} \left|\Tr\left(O \cdot (e^{t\mathcal{L}^\dagger}[\rho] - \sigma)\right)\right| \leq  \sup_{\|O\| \leq 1} \epsilon \cdot \|O\| = \epsilon\,,
\end{equation} and so $t_\textup{mix}(\epsilon) \leq \sup_{O} t^{(O)}_\textup{mix}(\epsilon)$.

One of the useful properties of observable-specific mixing times is how they behave for a sum of operators:

\begin{lemma}\label{lemma: mixing time of a sum}
    Given a set of observables $\{O_i\}_{i=1}^K$, the mixing time of their sum $O = \sum_i O_i$ is upper bounded like \begin{equation}
        t_\textup{mix}^{(O)}(\epsilon) \leq \sup_i t_\textup{mix}^{(O_i)}\left(\epsilon \cdot \frac{\|O\|}{\sum_j \|O_j\|}\right)\,.
    \end{equation} In particular, if each $O_i$ mixes in a constant time, and $\|O\| = \Omega (\sum_i \|O_i\|)$, then $O$ also mixes in a constant time.
\end{lemma}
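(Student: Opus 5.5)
The plan is to work entirely in the Heisenberg picture, using the simplified characterisation \eqref{def: simplified mixing time of O}. Linearity of $e^{t\mathcal{L}}$ and of $O\mapsto \Tr(O\cdot\sigma)$ gives the decomposition
\begin{equation}
    e^{t\mathcal{L}}[O] - \Tr(O\cdot\sigma)\cdot I = \sum_{i=1}^K \left(e^{t\mathcal{L}}[O_i] - \Tr(O_i\cdot\sigma)\cdot I\right).
\end{equation}
Applying the triangle inequality in operator norm then bounds $\|e^{t\mathcal{L}}[O] - \Tr(O\cdot\sigma) I\|$ by $\sum_i \|e^{t\mathcal{L}}[O_i] - \Tr(O_i\cdot\sigma) I\|$.

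Set $\epsilon' = \epsilon\,\|O\|/\sum_j\|O_j\|$ and $T = \sup_i t_\textup{mix}^{(O_i)}(\epsilon')$. We may assume $T<\infty$, otherwise there is nothing to prove. The key point is that each function $t\mapsto \|e^{t\mathcal{L}}[O_i] - \Tr(O_i\cdot\sigma) I\|$ is monotonically non-increasing, as observed before \eqref{def: simplified mixing time of O}. It is also continuous in $t$, so the infimum in the definition is attained. Hence for every $t\geq T$ and every $i$ we have $\|e^{t\mathcal{L}}[O_i] - \Tr(O_i\cdot\sigma) I\| \leq \epsilon'\|O_i\|$. Summing over $i$ gives
\begin{equation}
    \|e^{t\mathcal{L}}[O] - \Tr(O\cdot\sigma) I\| \leq \epsilon'\sum_i\|O_i\| = \epsilon\,\|O\|.
\end{equation}
Therefore $T$ belongs to the set whose infimum defines $t_\textup{mix}^{(O)}(\epsilon)$, which yields the bound.

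Two degenerate cases need separate treatment. If some $O_i$ vanishes, its condition holds trivially at $t=0$. If $\|O\|=0$, the claim is trivial because $O=0$ mixes at time $0$.

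For the ``in particular'' statement, suppose $\|O\|\geq c\sum_i\|O_i\|$ for a constant $c>0$. Then $\epsilon'\geq c\,\epsilon$. Mixing times are non-increasing in the precision parameter, since a larger tolerance gives a larger admissible set. So $t_\textup{mix}^{(O)}(\epsilon)\leq \sup_i t_\textup{mix}^{(O_i)}(c\,\epsilon)$, and this is constant whenever each $O_i$ mixes in a time uniform in $i$ and in the system size.

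The only mildly delicate step is the monotonicity and attainment argument, which lets a single time $T$ work simultaneously for all $i$. This rests on contractivity of $e^{t\mathcal{L}}$ in spectral norm, already established in the paper. Beyond that, the argument is routine.
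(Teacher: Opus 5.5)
Your proposal is correct and follows essentially the same route as the paper: linearity, the triangle inequality, bounding each term by $\epsilon\,\|O\|\,\|O_i\|/\sum_j\|O_j\|$ beyond the supremum time, and summing to $\epsilon\,\|O\|$. Your additions (the monotonicity/attainment remark, the degenerate cases, and monotonicity in the precision parameter for the ``in particular'' part) only make explicit what the paper leaves implicit.
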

\begin{proof}
    Observe that for all $t \geq \sup_i t_\textup{mix}^{(O_i)}\left(\epsilon \cdot \frac{\|O\|}{\sum_j \|O_j\|}\right)$, we have that \begin{equation}
        \|e^{\mathcal{L}t}[O] - \Tr(O\cdot \sigma) \cdot I\| \leq \sum_i \|e^{\mathcal{L}t}[O_i] - \Tr(O_i\cdot \sigma) \cdot I\| \leq \sum_i \epsilon \cdot \frac{\|O\|}{\sum_j \|O_j\|} \cdot \|O_i\| = \epsilon \cdot \|O\|\,,
    \end{equation} giving us the bound on the mixing time of $O$ as per the simplified Definition \eqref{def: simplified mixing time of O}. Further, if $\|O\| = \Omega (\sum_i \|O_i\|)$, then $\frac{\|O\|}{\sum_j \|O_j\|}$ is lower bounded by a positive system-size-independent constant. (Note that this assumption on $\|O\|$ together with triangle inequality says that $\|O\| = \Theta (\sum_i \|O_i\|)$.) Then if $O_i$ mixes in a constant time, $t_\textup{mix}^{(O_i)}(\epsilon)$ is upper bounded uniformly in the system size for any constant $\epsilon$, and so is $t_\textup{mix}^{(O_i)}\left(\epsilon \cdot \frac{\|O\|}{\sum_j \|O_j\|}\right)$, giving us the constant mixing time of $O$.
\end{proof}

 In the case that each $O_i$ mixes in a constant time, if we had explicit knowledge of their exponential contractivity $\|e^{\mathcal{L}t}[O_i] - \Tr(O_i\cdot \sigma) \cdot I\| \leq c_i e^{-\Delta_i t} \|O_i\|$, then we can similarly get the exponential contractivity for $O$ as \begin{align}
        \|e^{\mathcal{L}t}[O] - \Tr(O\cdot \sigma) \cdot I\| \leq \sum_i \|e^{\mathcal{L}t}[O_i] - \Tr(O_i\cdot \sigma) \cdot I\| \leq \sum_i c_i e^{-\Delta_i t} \|O_i\|
        \leq c_\textup{max} e^{-\Delta_\textup{min} t} \sum_i \|O_i\|\,,
    \end{align} which leads to the bound on the mixing time as \begin{equation}
        t_\textup{mix}^{(O)}(\epsilon) \leq \frac{1}{\Delta_\textup{min}} \log \left( \frac{c_\textup{max}}{\epsilon} \cdot \frac{\sum_i \|O_i\|}{\|O\|} \right)\,.
    \end{equation}

\newpage
\section{Qudit spin systems}\label{sec: qudits}

\begin{theorem}\label{thm: constant mixing of geometrically-local observables}
    Consider a uniform family of quasi-local and rapidly mixing Lindbladians $\mathcal{L}$, with a uniform lower bound on their spectral gaps, acting on a qudit Hilbert space. Any observable $O$ which can be written as a sum of geometrically-local terms will mix in a constant (system-size-independent) time when evolved under $\mathcal{L}$, \begin{equation}
        t_\textup{mix}^{(O)}(\epsilon) =\mathcal{O}(\log(1/\epsilon))\,.
    \end{equation}
\end{theorem}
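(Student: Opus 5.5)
The plan follows the lightcone splitting sketched in the overview: first obtain a system-size-independent exponential contractivity bound for each single geometrically-local term $O_A$, then lift it to the sum $O=\sum_A O_A$ via Lemma \ref{lemma: mixing time of a sum} and the exponential-contractivity remark after it.

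\textbf{Single local term.} Fix $O_A$ supported on a region $A$ of bounded size, and a radius $r$ to be chosen later. Let $\mathcal{L}_{B_A(r)}$ be the restriction of $\mathcal{L}$ to the ball $B_A(r)$; by uniformity of the family it is itself rapidly mixing with uniformly gapped behaviour. Adding and subtracting $e^{t\mathcal{L}_{B_A(r)}}[O_A]$ and its stationary value gives the decomposition displayed in the overview. The \emph{inside} term is controlled by rapid mixing of the truncated Lindbladian, which yields a bound $\operatorname{poly}(|B_A(r)|)\,e^{-\Delta t}\|O_A\|$. The full mixing time is polylogarithmic in the size, so the prefactor is at most polynomial in $|B_A(r)|=\mathcal{O}(r^D)$.

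The \emph{outside} term, $2\|e^{t\mathcal{L}}[O_A]-e^{t\mathcal{L}_{B_A(r)}}[O_A]\|$, is bounded by a Duhamel argument together with the quasi-locality Lieb–Robinson bound for Lindbladians. This gives $2|A|J\,\frac{e^{vt}-1-vt}{v}e^{-\gamma r}\|O_A\|$. One also has to check that the stationary value subtracted at the end is $\Tr(O_A\sigma)$ rather than the truncated one. I would handle this by noting that $\|e^{t\mathcal{L}}[O_A]-c\,I\|$ is minimised up to a factor of $2$ by any constant $c$, and that $e^{t\mathcal{L}}[O_A]\to\Tr(O_A\sigma)I$ monotonically. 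Hence it suffices to show that $e^{t\mathcal{L}}[O_A]$ is close to \emph{some} multiple of the identity, and this absorbs the discrepancy into the factor $2$.

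\textbf{Choice of radius.} Set $r=\kappa t$ with $\kappa>v/\gamma$, say $\kappa=2v/\gamma$. The outside term is then at most $\mathcal{O}(e^{-vt})\|O_A\|$. The inside term is $\operatorname{poly}(t)\,e^{-\Delta t}\|O_A\|\le c'e^{-\Delta' t}\|O_A\|$ for any $\Delta'<\Delta$, with $c'$ independent of $n$. Altogether this gives $\|e^{t\mathcal{L}}[O_A]-\Tr(O_A\sigma)I\|\le c\,e^{-\Delta'' t}\|O_A\|$, with constants depending only on the locality parameters, $D$ and $|A|$. For $t$ beyond the system diameter the ball covers everything and the bound follows directly from the global gap and rapid mixing, so this regime causes no trouble.

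\textbf{Summation.} Write $O=\sum_A O_A$. Applying the exponential-contractivity version of Lemma \ref{lemma: mixing time of a sum} gives $t^{(O)}_\textup{mix}(\epsilon)\le \frac{1}{\Delta''}\log\!\big(\frac{c}{\epsilon}\frac{\sum_A\|O_A\|}{\|O\|}\big)$. The main obstacle is this norm ratio, since in general $\|O\|$ can be much smaller than $\sum_A\|O_A\|$ because of cancellations. I would use the standard fact from the learning literature \cite{Huang2023, lewis2024improved}: after removing the identity component, which does not affect mixing, one rewrites $O$ in a canonical form, for instance with traceless Pauli-type terms grouped by support. For geometrically-local operators with bounded-degree interaction graph, the results of \cite{Huang2023, lewis2024improved} give $\sum_A\|O_A\|\le C\,\|O\|$ with $C$ depending only on the locality and the dimension. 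Some care is needed here, because the local pieces of the canonical form must be traceless so that their mixing is nontrivial but still controlled, and the identity part must be handled separately. With this ratio bounded by a constant, the bound becomes $\mathcal{O}(\log(1/\epsilon))$ uniformly in $n$.
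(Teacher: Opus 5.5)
Your proposal is correct and follows the paper's proof essentially step for step. It uses the same splitting of the contractivity of $O_A$ into an outer part, controlled by the Lieb--Robinson localisation bound, and an inner part, controlled by rapid mixing of $\mathcal{L}_{B_A(r)}$; the same factor-$2$ replacement of the subtracted constant by $\Tr(O_A\sigma)$; and the same lift to $O=\sum_A O_A$ via Lemma~\ref{lemma: mixing time of a sum} and the Pauli-$1$-norm bound of \cite{lewis2024improved}.

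The only difference is cosmetic. You take $r=\kappa t$ with $\kappa>v/\gamma$, which gives an explicit, $\epsilon$-independent exponential contractivity $c\,e^{-\Delta'' t}\|O_A\|$ for each term. The paper instead lets $r=\mathcal{O}(t+\log(1/\epsilon))$ depend on $\epsilon$, so that the outer term is exactly $\epsilon/8$. Note also that the $\operatorname{poly}(|S|)\,e^{-\Delta t}$ contractivity is the paper's definition of rapid mixing with a uniform gap, not a consequence of polylogarithmic mixing as your aside suggests.
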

\begin{proof}
    Firstly note that a uniform family of Lindbladians is characterised by the Lindbladian acting on the infinite lattice $\Lambda$ together with its restrictions to finite sizes $S \subseteq \Lambda$ with appropriate boundary conditions \cite[Definition 3]{cubitt2015stability}. In the case of quantum Gibbs samplers, this simply corresponds to restricting the Hamiltonian of the system and the set of jump operators to the set of sites $S$. Here we characterise rapid mixing by its contractivity (in the Heisenberg picture) as follows: \begin{equation}
        \left\|e^{t\mathcal{L}_S}[O_S] - \Tr_S\left(e^{t\mathcal{L}_S}[O_S]\right) \frac{I}{d^{|S|}} \right\| \leq \operatorname{poly}(|S|)\cdot e^{-\Delta t}\cdot\|O_S\|\,,
    \end{equation} where $d$ is the local qudit dimension. In particular, this also means that we require the spectral gap of $\mathcal{L}$ to be uniformly lower bounded by $\Delta$ (in contrast to e.g.~closing polylogarithmically with system size). Further, quasi-locality here means that $\mathcal{L}$ has exponentially decaying interactions, which is also a common property for quantum Gibbs samplers with local jump operators \cite{chen2025efficient, ding2025efficient}. 

    Now consider a geometrically-local term $O_A$ supported on a finite local region $A$, and also the ball $B_A(r)$ of radius $r$ around $A$. We can split the contractivity of $O_A$ under $\mathcal{L}$ into a contribution coming from $B_A(r)$ and from $\Lambda\backslash B_A(r)$ as follows: \begin{align}
        \left\| e^{t\mathcal{L}}[O_A] - \Tr(e^{t\mathcal{L}}[O_A]) \frac{I}{d^n} \right\| &\leq  \left\| e^{t\mathcal{L}}[O_A] - e^{t\mathcal{L}_{B_A(r)}}[O_A] - \Tr(e^{t\mathcal{L}}[O_A]-e^{t\mathcal{L}_{B_A(r)}}[O_A]) \frac{I}{d^n} \right\|\\ &\qquad + \left\| e^{t\mathcal{L}_{B_A(r)}}[O_A] - \Tr(e^{t\mathcal{L}_{B_A(r)}}[O_A]) \frac{I}{d^n} \right\| \\
        &\leq 2  \left\| e^{t\mathcal{L}}[O_A] - e^{t\mathcal{L}_{B_A(r)}}[O_A]\right\| + \left\| e^{t\mathcal{L}_{B_A(r)}}[O_A] - \Tr_{B_A(r)}(e^{t\mathcal{L}_{B_A(r)}}[O_A]) \frac{I}{d^{|{B_A(r)|}}} \right\|\,,
    \end{align} where we've also used that $
        \frac{1}{d^n} \left| \Tr(e^{t\mathcal{L}}[O_A]-e^{t\mathcal{L}_{B_A(r)}}[O_A])\right| \leq   \left\| e^{t\mathcal{L}}[O_A] - e^{t\mathcal{L}_{B_A(r)}}[O_A]\right\|$. 
    Since the Lindbladian $\mathcal{L}$ is quasi-local, its dynamics obeys Lieb-Robinson bounds \cite{Barthel2012,Nachtergaele2011,Poulin2010}, and it can be localised using \cite[Lemma 11]{cubitt2015stability} as \begin{equation}
        \left\| e^{t\mathcal{L}}[O_A] - e^{t\mathcal{L}_{B_A(r)}}[O_A]\right\| \leq |A| \cdot J \cdot \frac{e^{vt}-1-vt}{v} \cdot e^{-\gamma r} \cdot \|O_A\|\,,
    \end{equation} where $J$, $v$, and $\gamma$ are system-size-independent constants. Hence in order to make the first contribution sufficiently small, \begin{equation}
        2  \left\| e^{t\mathcal{L}}[O_A] - e^{t\mathcal{L}_{B_A(r)}}[O_A]\right\| \leq \frac{\epsilon}{4} \|O_A\|\,,
    \end{equation} we just need to choose a sufficiently large radius. Setting \begin{equation}
        |A| \cdot J \cdot \frac{e^{vt}-1-vt}{v} \cdot e^{-\gamma r} \overset{\text{set}}{=} \frac{\epsilon}{8}
    \end{equation} gives \begin{equation}
        r = \frac{1}{\gamma} \log\left( 8|A| J\frac{e^{vt}-1-vt}{v}\cdot \frac{1}{\epsilon} \right) = \mathcal{O}(t+\log(1/\epsilon))\,.
    \end{equation} Regarding the second contribution, by the assumption on rapid mixing of $\mathcal{L}$, we have that \begin{equation}
        \left\| e^{t\mathcal{L}_{B_A(r)}}[O_A] - \Tr_{B_A(r)}(e^{t\mathcal{L}_{B_A(r)}}[O_A]) \frac{I}{d^{|{B_A(r)|}}} \right\| \leq \operatorname{poly}(|B_A(r)|) \cdot e^{-\Delta t} \cdot \|O_A\|\,.
    \end{equation} Since we are in a local setting, say on a lattice in $D$ dimensions, we have that $|B_A(r)| = \mathcal{O}(r^D) = \operatorname{poly}(r)$. Hence choosing $r = \mathcal{O}(t)$ implies $|B_A(r)| = \operatorname{poly}(t)$, and hence \begin{equation}
        \left\| e^{t\mathcal{L}_{B_A(r)}}[O_A] - \Tr_{B_A(r)}(e^{t\mathcal{L}_{B_A(r)}}[O_A]) \frac{I}{d^{|{B_A(r)|}}} \right\| \leq \operatorname{poly}(t,\log(1/\epsilon)) \cdot e^{-\Delta t}\cdot \|O_A\|\,.
    \end{equation} Since the exponential decay in $t$ asymptotically beats the polynomial growth, the right hand side will eventually start decreasing towards $0$, and hence there will exist some $t_\star = \mathcal{O}(\log(1/\epsilon))$, independent of the system size, such that for any $t\geq t_\star$ we have\begin{equation}
        \left\| e^{t\mathcal{L}_{B_A(r)}}[O_A] - \Tr_{B_A(r)}(e^{t\mathcal{L}_{B_A(r)}}[O_A]) \frac{I}{d^{|{B_A(r)|}}} \right\| \leq \frac{\epsilon}{4}\cdot \|O_A\|\,.
    \end{equation} Together we have that for all $t\geq t_\star$, \begin{equation}
        \sup_\rho \left|\Tr\left(O_A \cdot (e^{t\mathcal{L}^\dagger}[\rho] - \sigma)\right)\right| \leq 2 \left\| e^{t\mathcal{L}}[O_A] - \Tr(e^{t\mathcal{L}}[O_A]) \frac{I}{d^n} \right\| \leq \epsilon \cdot \|O_A\|\,,
    \end{equation} where we've also used that \begin{align}
        \left|\Tr\left(O_A \cdot (e^{t\mathcal{L}^\dagger}[\rho] - \sigma)\right)\right| = \left|\Tr\left((e^{t\mathcal{L}}[O_A] + c I)\cdot (\rho - \sigma)\right)\right|
        \leq \|\rho-\sigma\|_{\Tr}\cdot\|e^{t\mathcal{L}}[O_A] + c I\|
        \leq 2\|e^{t\mathcal{L}}[O_A] + c I\|
    \end{align} for any scalar $c$, and so the term $O_A$ mixes in a constant time \begin{equation}
        t_\textup{mix}^{(O_A)}(\epsilon) \leq t_\star = \mathcal{O}(\log(1/\epsilon))\,.
    \end{equation}

    Finally, we wish to lift the constant mixing time of local terms $O_A$ to the observable $O = \sum_A O_A$. To do so, we can utilise Lemma \ref{lemma: mixing time of a sum}, which just requires upper bounding $\frac{\sum_A \|O_A\|}{\|O\|}$ by a constant. Note that, for geometrically-local observables on a Hilbert space of qubits, such a bound has been previously shown for Pauli expansions. Specifically, for $O = \sum_i O_i$, where $O_i$ are multiples of distinct Paulis, so that $\sum_i \|O_i\|$ is the Pauli-$1$ norm, by \cite[Theorem 2]{lewis2024improved}, we have that \begin{equation}
        \frac{\sum_i \| O_i\|}{\|O\|} \leq 2^D \cdot V \cdot 4^V\,,
    \end{equation} where $D$ is the dimension of the lattice and $V$ is the maximal volume of local interactions of the observable, defined as $V = \prod_{k=1}^D R_k$, where $R_k$ is the maximal range of interactions in the direction $k$. Observe that we can map a geometrically-local observable on qudits onto a geometrically-local observable of qubits, where the maximal range of interactions will get appropriately rescaled by a factor of $\lceil \log_2(d)\rceil$. Without loss of generality, we can then consider the expansion of $O$ in Paulis and apply the previous result in order to get \begin{equation}
        \frac{\sum_A \|O_A\|}{\|O\|} \leq c\,,
    \end{equation} where the constant $c$ depends only on the dimension of the lattice $D$, the local qudit dimension $d$, and the maximal range of interactions $R$. Combining this bound with Lemma \ref{lemma: mixing time of a sum} and the constant mixing time of the individual terms $O_A$, we find that the observable $O$ mixes in a constant time, yielding the result of this theorem.
\end{proof}

\begin{corollary}[Gibbs states of weakly-interacting systems]\label{cor: constant mixing weakly-interacting Gibbs}
    For a quasi-local qudit system $H = H_0+\lambda V$, where $H_0$ is a separable (1-local) Hamiltonian, there exists a maximal interaction strength $\lambda_\textup{max}$, independent of the system size, below which quantum Gibbs samplers thermalise geometrically-local observables in a constant time.
\end{corollary}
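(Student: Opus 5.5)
This follows by applying Theorem \ref{thm: constant mixing of geometrically-local observables} directly; the work is checking its three hypotheses for the quantum Gibbs sampler of $H = H_0+\lambda V$. Those hypotheses are quasi-locality, uniformity of the family, and Heisenberg-picture contractivity with a $\operatorname{poly}(|S|)\, e^{-\Delta t}$ prefactor and a system-size-independent rate $\Delta$.

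\textbf{Step 1 (quasi-locality and uniformity).} For the exactly detailed-balanced samplers of \cite{chen2025efficient, ding2025efficient}, built from local jump operators with a smooth (e.g.\ Gaussian) filter, the operator Fourier transforms of the jumps are quasi-local. The coherent correction term is also quasi-local, with exponentially decaying tails whose constants depend only on $\beta$, the locality of $H$, and $\lambda \le \lambda_\textup{max}$. Restricting $H$ and the jump set to a region $S$ gives the family $\mathcal{L}_S$, in the sense of \cite[Definition 3]{cubitt2015stability}. The Lieb--Robinson constants $J, v, \gamma$ are then uniform in $n$, and \cite[Lemma 11]{cubitt2015stability} applies as in the proof of the theorem.

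\textbf{Step 2 (rapid mixing, the main obstacle).} Here I would invoke the rapid mixing results for weakly-interacting Gibbs states via the oscillator norm \cite{smid2025rapid}. At $\lambda=0$ the sampler is a sum of commuting single-site Lindbladians, each with a gap bounded below by a constant $\Delta_0$ depending on $\beta$ and $d$. For $\lambda$ below some $\lambda_\textup{max}$, the perturbative oscillator-norm argument gives $\oscnorm{e^{t\mathcal{L}_S}[O]} \le e^{-\Delta t}\oscnorm{O}$ with $\Delta \ge \Delta_0 - \mathcal{O}(\lambda) > 0$, uniformly in $S$.

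To convert this into the contractivity assumed in the theorem, I would use two bounds. First, $\|X - \Tr_S(X) I/d^{|S|}\| \le \mathcal{O}(|S|)\,\oscnorm{X}$, obtained by telescoping over single-site depolarisations. Second, $\oscnorm{O_S} \le \mathcal{O}(|S|)\|O_S\|$. Together these give the required $\operatorname{poly}(|S|)\, e^{-\Delta t}\|O_S\|$.

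The subtle point is that $\lambda_\textup{max}$ and $\Delta$ must be independent of both $n$ and the region $S$, including boundary effects of the truncation. I expect this to hold because the perturbative argument is local, with constants set by the maximal degree of $V$ and its interaction range. I would check this explicitly, and also confirm that the rapid-mixing statement covers the restricted Lindbladians $\mathcal{L}_{B_A(r)}$ and not only the full one.

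\textbf{Step 3 (conclusion).} With these hypotheses verified, Theorem \ref{thm: constant mixing of geometrically-local observables} gives $t_\textup{mix}^{(O)}(\epsilon) = \mathcal{O}(\log(1/\epsilon))$ for any sum of geometrically-local terms. The lift from single terms to sums is already handled inside the theorem by Lemma \ref{lemma: mixing time of a sum} together with the Pauli-$1$-norm bound of \cite{lewis2024improved}. This covers, in particular, the energy $H$ itself and local order parameters.
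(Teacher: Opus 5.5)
Your proposal is correct and takes the same route as the paper. The paper's proof is only the citation that \cite[Theorem III.3]{smid2025rapid} makes the sampler quasi-local and rapidly mixing for $\lambda<\lambda_\textup{max}$ (with $\lambda_\textup{max}$ from \cite[Corollary III.3.1]{smid2025rapid}), after which Theorem~\ref{thm: constant mixing of geometrically-local observables} applies; your conversion of oscillator-norm contraction into the $\operatorname{poly}(|S|)\,e^{-\Delta t}\|O_S\|$ hypothesis and your uniformity check over the restricted Lindbladians $\mathcal{L}_{B_A(r)}$ make explicit what the paper leaves implicit.
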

    It was proven in \cite[Theorem III.3]{smid2025rapid} that such quantum Gibbs samplers with local jump operators are quasi-local and rapidly mixing up to $\lambda_\textup{max}$ given by \cite[Corollary III.3.1]{smid2025rapid}.

\begin{corollary}[Gibbs states at high temperatures]\label{cor: constant mixing  high-temperature Gibbs}
    For a geometrically-local qudit Hamiltonian $H$, there exists a critical inverse temperature $\beta_\star$, independent of the system size, below which quantum Gibbs samplers thermalise geometrically-local observables in a constant time.
\end{corollary}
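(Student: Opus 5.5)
The corollary follows from Theorem \ref{thm: constant mixing of geometrically-local observables} once its hypotheses are verified, in the same way as Corollary \ref{cor: constant mixing weakly-interacting Gibbs}. Those hypotheses are: \emph{(i)} a uniform family of quasi-local Lindbladians, \emph{(ii)} rapid mixing in the Heisenberg-picture contractivity form $\|e^{t\mathcal{L}_S}[O_S]-\Tr_S(e^{t\mathcal{L}_S}[O_S])I/d^{|S|}\|\le \operatorname{poly}(|S|)e^{-\Delta t}\|O_S\|$, and \emph{(iii)} a spectral gap bounded below uniformly in the system size. I would check these for the quantum Gibbs sampler of a geometrically-local $H$ at inverse temperature $\beta<\beta_\star$. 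The sampler should be of the exactly detailed-balanced form \cite{chen2025efficient, ding2025efficient}, with single-site jump operators and a Gaussian/Metropolis filter.

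For quasi-locality, the jump operators $\hat A^a(\omega)$ and the coherent term are built from Heisenberg-evolved local operators $e^{iHt}A^a e^{-iHt}$, integrated against filter functions that decay quickly in $t$. Lieb-Robinson bounds for $H$ then show that each term of $\mathcal{L}$ is approximable by terms supported on balls of radius $r$, with error decaying exponentially in $r$ (or stretched-exponentially for a Gaussian filter, which we bound by an exponential at the cost of constants). The restrictions to finite $S$, obtained by restricting $H$ and the jump set to $S$, define the uniform family required in \cite[Definition 3]{cubitt2015stability}. This gives the constants $J,v,\gamma$ used in the localisation bound \cite[Lemma 11]{cubitt2015stability}. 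At high temperature these bounds are uniform in $\beta\le\beta_\star$, so the constants can be fixed independently of $\beta$.

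For rapid mixing, I would invoke the high-temperature results \cite{Rouze2026} (via the oscillator norm), or alternatively \cite{bakshi2026dobrushin, bakshi2026rapid}. These establish that below a system-size-independent $\beta_\star$ the oscillator norm $\oscnorm{e^{t\mathcal{L}}[O]}$ contracts as $e^{-\Delta t}\oscnorm{O}$ with $\Delta>0$ uniform. I would then convert this into the spectral-norm contractivity in (ii). The standard comparison $\|X-\Tr(X)I/d^{|S|}\|\le \oscnorm{X}$ (up to a constant) handles one direction. The other direction uses $\oscnorm{O_S}\le 2\sum_j\|O_S\|\le 2|S|\,\|O_S\|$, which produces exactly the $\operatorname{poly}(|S|)$ prefactor. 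The same contraction yields a uniform spectral gap $\Delta$, giving (iii). The argument must be applied to each restricted Lindbladian $\mathcal{L}_{B_A(r)}$ with the same $\beta_\star$ and $\Delta$. This works because the high-temperature analyses depend only on local interaction data (degree, strength, range), which restriction does not increase.

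The main obstacle is precisely this last uniformity. Both the rapid-mixing statement and the quasi-locality constants must hold for every finite restriction with the boundary conditions used in the localisation lemma, and the cited works may state them only for the full system. If a direct citation is not available, I would argue that the proofs in \cite{Rouze2026} are local in nature and go through verbatim for any subset $S$. With (i)--(iii) in hand, Theorem \ref{thm: constant mixing of geometrically-local observables} together with Lemma \ref{lemma: mixing time of a sum} gives $t_\textup{mix}^{(O)}(\epsilon)=\mathcal{O}(\log(1/\epsilon))$ for any geometrically-local $O$.
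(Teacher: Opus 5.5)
Your proposal is correct and follows the same route as the paper: verify the hypotheses of Theorem~\ref{thm: constant mixing of geometrically-local observables} by invoking the high-temperature result \cite[Theorem 1]{Rouze2026}, which the paper cites for both quasi-locality and rapid mixing below $\beta_\star = 1/(615^D\cdot 2J)$, and then apply the theorem. Your oscillator-norm-to-spectral-norm conversion and your uniformity-over-restrictions discussion only make explicit what the paper takes for granted; the one step the paper makes explicit that you skip is that \cite{Rouze2026} is stated for qubits, so the paper first encodes each qudit into $\lceil \log_2 d\rceil$ qubits, which keeps $H$ short-range.
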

    It was proven in \cite[Theorem 1]{Rouze2026} that for a $(k,l)$-local Hamiltonian on a $D$-dimensional lattice, $H = \sum_X h_X$, with $\|h_X\|\leq h$, if we define $J =hkl$, then for any $\beta < \beta_\star = \frac{1}{615^D \cdot 2J}$ such quantum Gibbs samplers with local jump operators are quasi-local and rapidly mixing. Here we again use the fact that short-range qudit Hamiltonians can be mapped to short-range qubit Hamiltonians. Note that this result is straightforwardly generalisable to quasi-local Hamiltonians.

\begin{corollary}[Ground states of weakly-interacting systems]\label{cor: constant mixing ground states}
    For a geometrically-local qubit Hamiltonian $H = -\sum_i Z_i + \lambda V$, there exists a maximal interaction strength $\lambda_\textup{max}$, independent of the system size, below which Lindbladians for quantum ground state preparation thermalise geometrically-local observables in a constant time.
\end{corollary}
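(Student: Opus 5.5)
The plan is to reduce Corollary \ref{cor: constant mixing ground states} directly to Theorem \ref{thm: constant mixing of geometrically-local observables}, in the same way the two preceding corollaries were obtained. The theorem needs three inputs for the family of ground-state-preparation Lindbladians associated with $H = -\sum_i Z_i + \lambda V$:
\begin{enumerate}
\item they form a uniform family, meaning restrictions to finite regions $S\subseteq\Lambda$ with suitable boundary conditions;
\item they are quasi-local, with exponentially decaying interactions, so that the localisation estimate of \cite[Lemma 11]{cubitt2015stability} applies with system-size-independent $J$, $v$, $\gamma$;
\item they are rapidly mixing in the contractive form $\|e^{t\mathcal{L}_S}[O_S] - \Tr_S(e^{t\mathcal{L}_S}[O_S]) I/d^{|S|}\| \le \operatorname{poly}(|S|)e^{-\Delta t}\|O_S\|$, with $\Delta$ uniform in $|S|$.
\end{enumerate}
Once these hold for all $\lambda<\lambda_\textup{max}$, the theorem gives $t_\textup{mix}^{(O)}(\epsilon)=\mathcal{O}(\log(1/\epsilon))$ for every sum of geometrically-local terms $O$.

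For (i) and (ii), I would take the ground-state Lindbladian of \cite{ding2025end} with local jump operators filtered by a smooth, compactly supported (in frequency) filter. The filtered jumps $\int f(s)e^{iHs}A e^{-iHs}\,ds$ are quasi-local by the Hamiltonian Lieb-Robinson bound for the geometrically-local $H$ combined with the rapid decay of $f$. The Lindbladian built from them is then a sum of exponentially decaying terms, and restricting $H$ and the jump set to $S$ gives the uniform family, exactly as for Gibbs samplers.

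For (iii), I would invoke the rapid-mixing result of \cite{Zhan2026}. That result shows, via the oscillator-norm technique, that for $\lambda$ below a size-independent threshold these Lindbladians contract the oscillator norm at a uniform rate. The unperturbed point is $\lambda=0$, where $H_0=-\sum_i Z_i$ is separable and the dynamics factorises into single-qubit generators, each driving its qubit to $|0\rangle$ with gap $\Omega(1)$. Around this point, $\lambda V$ is treated perturbatively.

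The main obstacle is converting that oscillator-norm contraction into the exact contractivity form demanded by the theorem. The route I would try first is the standard one from \cite{Rouze2026, smid2025rapid}:
\begin{itemize}
\item bound $\|X - \Tr(X)I/d^{|S|}\| \le \tfrac{1}{2}\oscnorm{X}$, using the telescoping over sites of the oscillator norm;
\item bound $\oscnorm{O_S}\le 2|S|\|O_S\|$.
\end{itemize}
This gives the required $\operatorname{poly}(|S|)e^{-\Delta t}$ prefactor, with $\Delta$ the uniform contraction rate. It also implies a uniform spectral gap lower bound, since $\Delta$ does not depend on $|S|$.

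If the oscillator-norm contraction in \cite{Zhan2026} holds only for the infinite-volume or periodic system, I would re-run their perturbative argument on each finite ball $B_A(r)$. The bound there depends only on local interaction data, not on the boundary, so the threshold $\lambda_\textup{max}$ stays independent of $|S|$. Finally, I would fix $\lambda_\textup{max}$ as the minimum of the thresholds for quasi-locality and for rapid mixing, and apply Theorem \ref{thm: constant mixing of geometrically-local observables}.
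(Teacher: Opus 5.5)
Your reduction is the same as the paper's. The paper's proof consists only of taking quasi-locality and rapid mixing of these Lindbladians from \cite[Theorem 8]{Zhan2026} and applying Theorem \ref{thm: constant mixing of geometrically-local observables}. The paper also makes one remark that you omit and should state: the steady state here is the pure ground state. The corollary is therefore legitimate only because nothing in the theorem's proof uses full rank, detailed balance or $\sigma^{-1/2}$ (contrast Proposition \ref{prop: single site observables under separable Lindbladian}). The contractivity there is centred at $\Tr_S(\cdot)\,I/d^{|S|}$, Lemma \ref{lemma: mixing time of a sum} is just the triangle inequality, and the final step uses only stationarity of $\sigma$ and $\|\rho-\sigma\|_{\Tr}\le 2$.

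The step that fails as written is your own derivation of hypothesis (ii). You take $\hat f$ compactly supported in frequency, which is the natural way to make the ground state an exact steady state uniformly in $n$. By Paley--Wiener, however, if $|f(s)|\le Ce^{-a|s|}$ then $\hat f$ extends holomorphically to the strip $|\mathrm{Im}\,\omega|<a$. It therefore cannot vanish on an interval unless $f\equiv 0$. The Lieb--Robinson-plus-filter-tail argument thus only bounds the jump tails by roughly $\int_{|s|\gtrsim r/v}|f(s)|\,\mathrm{d}s$. This decays superpolynomially (stretched-exponentially, $e^{-\gamma r^{\alpha}}$ with $\alpha<1$, for a Gevrey-class filter) but never like $e^{-\gamma r}$. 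So ``a sum of exponentially decaying terms'' does not follow. Neither does the localisation bound in the exact form the theorem invokes from \cite[Lemma 11]{cubitt2015stability}.

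The repair is easy. Either import quasi-locality from \cite{Zhan2026} as the paper does, or observe that the theorem's proof uses the localisation estimate only to choose $r(t)$. With a stretched-exponential tail, for which Lieb--Robinson bounds still hold, one needs $r=\mathcal{O}\big((t+\log(1/\epsilon))^{1/\alpha}\big)$. Then $|B_A(r)|$ is still polynomial in $t$ and $\log(1/\epsilon)$, $\operatorname{poly}(|B_A(r)|)\,e^{-\Delta t}$ still decays, and $t_\star=\mathcal{O}(\log(1/\epsilon))$ survives.
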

    It was proven in \cite[Theorem 8]{Zhan2026} that such Lindbladians are also rapidly mixing and quasi-local. Since our result doesn't require the steady state to be full-rank, it is also applicable in this setting. Note that this result is likely extendable to general quasi-local weakly-interacting systems as in Corollary \ref{cor: constant mixing weakly-interacting Gibbs} using similar techniques to \cite{smid2025rapid}.

\begin{remark}
    We note that the proof of Theorem \ref{thm: constant mixing of geometrically-local observables} requires rapid mixing. If we only had fast mixing from a constant spectral gap, the contribution from the inner region would scale with $e^{\mathcal{O}(t^D)-\Delta t}$ instead of $\operatorname{poly}(t)\cdot e^{-\Delta t}$. For any dimension $D\geq 2$, this would be an increasing function of $t$ and wouldn't decay. Interestingly, in $D=1$ dimension, there is a possibility for this argument to still hold, which would depend on the exact interplay between the spectral gap $\Delta$ and the (normalised) Lieb-Robinson velocity $\frac{v}{\gamma}$ of the Lindbladian. If the gap were to be larger, this theorem might also apply to the fast-mixing spin chain setting as studied in \cite{bergamaschi2026chains}.
\end{remark}

\enlargethispage{1cm}
\subsection{Classical simulability}

We can observe that results of Theorem \ref{thm: constant mixing of geometrically-local observables} are immediately applicable to lowering the complexity of simulating such Lindbladians and estimating geometrically-local expectation values: 

\begin{theorem}\label{thm: classical simulability}
    For a quasi-local and rapidly mixing Lindbladian $\mathcal{L}$ with a unique steady-state $\sigma$, there exists a classical algorithm (given by Algorithm \ref{alg: Estimating expectation values}) for evaluating the expectation values $\Tr(O\cdot \sigma)$ for any observable $O$, which is expressible as a sum of geometrically-local terms, with time complexity scaling like $\mathcal{O} \left(n\cdot e^{\mathcal{O}\left(\log(1/\epsilon)^D\right)}\right)$, where $n$ is the system size, $\epsilon$ the desired relative error on the expectation value, and $D$ the dimension of the system. Moreover, this scaling is optimal in the system size $n$, over which the algorithm is fully parallelisable.
\end{theorem}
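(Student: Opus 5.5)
The plan is to reuse the localisation argument from the proof of Theorem \ref{thm: constant mixing of geometrically-local observables}: each local term's steady-state expectation value can be computed from a finite patch of the lattice. Write $O=\sum_A O_A$, where there are $\mathcal{O}(n)$ terms, each supported on a region $A$ of constant size. For each term I would estimate $\Tr(O_A\cdot\sigma)$ by $\Tr\bigl(e^{t\mathcal{L}_{B_A(r)}}[O_A]\,\frac{I}{d^{|B_A(r)|}}\bigr)$. Concretely, the algorithm restricts the Lindbladian to the ball $B_A(r)$, evolves $O_A$ in the Heisenberg picture for time $t$ on that patch by exact classical matrix computation, and takes the normalised trace.

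The error analysis follows the proof of Theorem \ref{thm: constant mixing of geometrically-local observables}. First, $\|e^{t\mathcal{L}}[O_A]-\Tr(O_A\sigma) I\|\le \frac{\epsilon'}{2}\|O_A\|$ holds once $t\ge t_\star=\mathcal{O}(\log(1/\epsilon'))$. Second, by \cite[Lemma 11]{cubitt2015stability}, replacing $\mathcal{L}$ by $\mathcal{L}_{B_A(r)}$ costs at most $\frac{\epsilon'}{4}\|O_A\|$ provided $r=\mathcal{O}(t+\log(1/\epsilon'))=\mathcal{O}(\log(1/\epsilon'))$. Third, the normalised trace of an operator differs from any scalar $c$ by at most the operator norm of its difference from $cI$. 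Combining these, the per-term error is $\mathcal{O}(\epsilon'\|O_A\|)$. Summing over $A$ as in Lemma \ref{lemma: mixing time of a sum} gives total error $\epsilon'\sum_A\|O_A\|\le \epsilon' c\|O\|$, where $c$ is the constant from the proof of Theorem \ref{thm: constant mixing of geometrically-local observables} obtained via \cite[Theorem 2]{lewis2024improved}. Choosing $\epsilon'=\epsilon/c$ then gives relative error $\epsilon$.

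For the cost, note $|B_A(r)|=\mathcal{O}(r^D)=\mathcal{O}(\log(1/\epsilon)^D)$. The patch Lindbladian is therefore a matrix of dimension $d^{2|B_A(r)|}$. Exponentiating it or integrating the ODE takes $\operatorname{poly}(d^{|B_A(r)|})\cdot\operatorname{poly}(t)=e^{\mathcal{O}(\log(1/\epsilon)^D)}$ time per term. Since a uniform family of Lindbladians is defined so that the restriction $\mathcal{L}_{B_A(r)}$, with its boundary conditions, can be written down in time polynomial in the patch dimension, the total cost is $\mathcal{O}(n\cdot e^{\mathcal{O}(\log(1/\epsilon)^D)})$. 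The terms are processed independently, which gives full parallelisability. Optimality in $n$ is trivial: an observable with $\Theta(n)$ independent local terms requires reading $\Omega(n)$ input.

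The step I expect to be the main obstacle is making the localisation statement rigorous for the approximating quantity actually computed. The theorem bounds $\Tr(e^{t\mathcal{L}}[O_A])/d^n$, but the algorithm needs a comparison with $\Tr(O_A\sigma)$. This requires chaining through $e^{t\mathcal{L}}[O_A]$ being close to a multiple of $I$ with the uniform normalisation. A related technicality is how the uniform family prescribes boundary conditions for $\mathcal{L}_{B_A(r)}$, which must be efficiently computable, and whether the constants $J,v,\gamma,\Delta$ are uniform for the truncated generators. Should a truncated generator fail to be gapped, I would instead use only the Lieb–Robinson comparison with the full-system evolution at fixed $t_\star$. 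Rapid mixing would then be invoked solely for the full $\mathcal{L}$, which avoids needing the rapid-mixing assumption on the patches.
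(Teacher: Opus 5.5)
Your proposal is correct and follows essentially the same route as the paper. It uses the same patch estimator $\frac{1}{d^{|B_A(r)|}}\Tr_{B_A(r)}\bigl(e^{t\mathcal{L}_{B_A(r)}}[O_A]\bigr)$, the same error split into a mixing term controlled by Theorem \ref{thm: constant mixing of geometrically-local observables}, a Lieb--Robinson localisation term and the normalised-trace bound, and arrives at the same $e^{\mathcal{O}(\log(1/\epsilon)^D)}$ per-term cost. Your per-term budget $\epsilon'=\epsilon/c$ is in fact tidier than the paper's, which scales only the localisation error by $\|O\|/\sum_B\|O_B\|$ and uses $\|O\|$ itself in $r(t)$.

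The one thing to discard is the fallback in your last paragraph. The patches you simulate indeed need not be gapped. However, the $n$-independence of $t_\star$ comes from the proof of Theorem \ref{thm: constant mixing of geometrically-local observables}, and that proof uses contractivity of the restricted generators $\mathcal{L}_{B_A(r)}$. Rapid mixing of the full $\mathcal{L}$ alone carries a $\operatorname{poly}(n)$ prefactor, so it only gives $t_\star=\mathcal{O}(\log(n/\epsilon))$. That forces patches of size $\mathcal{O}(\log(n/\epsilon)^D)$ and destroys the linear scaling in $n$.
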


\renewcommand{\algorithmicrequire}{\textbf{Input:}}
\renewcommand{\algorithmicensure}{\textbf{Output:}}
\begin{algorithm}[H]
  \caption{Evaluating geometrically-local expectation values}
  \label{alg: Estimating expectation values}
   \begin{algorithmic}[1]
        \Require Description of $\mathcal{L}$, local expansion of $O=\sum_{A \in S} O_A$, accuracy $\epsilon$
        \Ensure Estimate for $\Tr(O\cdot \sigma)$ with an error of at most $\epsilon\cdot \|O\|$

        \State Calculate $t^*$ as the maximal solution to Equation \eqref{eqn: solve for t^*}
        \State Calculate $r^*$ from Equation \eqref{eqn: solution for r^*}
        \State $V = 0$
        \For{$A \in S$}
            \State Simulate $e^{t^* \cdot\mathcal{L}_{B_A(r^*)}}[O_A]$ classically \Comment{Can use na\"ive matrix exponentiation, or e.g.~\cite{Smid_GitHub_GibbsSampling}}
            \State $V \text{ += } \frac{1}{d^{|B_A(r^*)|}} \cdot \Tr\left( e^{t^*\cdot \mathcal{L}_{B_A(r^*)}}[O_A] \right)$
        \EndFor
        \State \Return $V$\Comment{The estimate for $\Tr(O\cdot \sigma)$ is $V$}

   \end{algorithmic}
\end{algorithm}

\begin{proof}
    Firstly, we express $O$ in an orthogonal basis $O=\sum_{A \in S} O_A$ with local terms $O_A$ (say, multiples of Paulis).
    Note that for a geometrically-local $O$, we have that $|S| = \mathcal{O}(n)$. 
    From the description of $\mathcal{L}$, we know the parameters $J,v$, and $\gamma$ appearing in the Lieb-Robinson bound, \begin{equation}
        \left\| e^{t\mathcal{L}}[O_X] - e^{t\mathcal{L}_{B_X(r)}}[O_X]\right\| \leq |X| \cdot J \cdot \frac{e^{vt}-1-vt}{v} \cdot e^{-\gamma r} \cdot \|O_X\|\,,
    \end{equation} and we also know the polynomial $q(x)$ and a lower bound $\Delta$ on the spectral gap appearing in the contractivity of $\mathcal{L}$, \begin{equation}
        \left\|e^{t\mathcal{L}_X}[O_X] - \Tr_X\left(e^{t\mathcal{L}_X}[O_X]\right) \frac{I}{d^{|X|}} \right\| \leq q(|X|)\cdot e^{-\Delta t}\cdot\|O_X\|\,.
    \end{equation}
    Let $B_A(r)$ be the ball of radius $r$ around the set $A$, and note that on a $D$-dimensional lattice, we have $|B_A(r)| = \mathcal{O}(r^D)$. Define \begin{equation}
        r(t) = \frac{1}{\gamma} \log\left( \left(\max_{A \in S}|A|\right) \cdot  J\frac{e^{vt}-1-vt}{v}\cdot \frac{9}{\epsilon} \cdot \frac{\sum_{A} \|O_A\|}{\|O\|} \right)\,,
    \end{equation} which guarantees \begin{equation}
        \left\| e^{t\mathcal{L}}[O_A] - e^{t\mathcal{L}_{B_A(r(t))}}[O_A]\right\| \leq \frac{\epsilon}{9} \cdot \frac{\|O\|}{\sum_B \|O_B\|} \cdot \|O_A\|\,.
    \end{equation} 
    Note that $\frac{\sum_{A} \|O_A\|}{\|O\|}$ is upper bounded by an $n$-independent value as per Theorem \ref{thm: constant mixing of geometrically-local observables}.
    Then let $t^*$ be the maximal solution to the equation \begin{equation}\label{eqn: solve for t^*}
        q\left(\max_{A \in S}|B_A(r(t))|\right) \cdot e^{-\Delta t} = \frac{2}{9} \epsilon\,.
    \end{equation} Further denote \begin{equation}\label{eqn: solution for r^*}
        r^* = r(t^*)\,.
    \end{equation}
    We then wish to approximate $\Tr(O\cdot \sigma) \approx \sum_{A\in S} \frac{1}{d^{|B_A(r^*)|}} \cdot \Tr_{B_A(r^*)}\left( e^{t^* \cdot\mathcal{L}_{B_A(r^*)}}[O_A] \right)$. This gives the following error: \begin{align}\hspace{-2cm}
        \left| \Tr(O\cdot \sigma) - \sum_{A\in S} \frac{1}{d^{|B_A(r^*)|}} \cdot \Tr_{B_A(r^*)}\left( e^{t^* \cdot\mathcal{L}_{B_A(r^*)}}[O_A] \right)\right| 
        &\leq  \left| \Tr(O\cdot \sigma) - \frac{1}{d^n}\Tr\left( e^{t^*\cdot \mathcal{L}}[O] \right) \right|\\ 
        &\quad + \sum_{A\in S} \left| \frac{1}{d^{|B_A(r^*)|}} \cdot \Tr_{B_A(r^*)}\left( e^{t^* \cdot\mathcal{L}_{B_A(r^*)}}[O_A] \right) - \frac{1}{d^n}\Tr\left( e^{t^*\cdot \mathcal{L}}[O_A] \right) \right|\,.\hspace{-2cm}
    \end{align} Now note that we have \begin{align}
        \left| \Tr(O\cdot \sigma) - \frac{1}{d^n}\Tr\left( e^{t^*\cdot \mathcal{L}}[O] \right) \right| = \left| \frac{1}{d^n}\Tr\left( e^{t^*\cdot \mathcal{L}}[O] - \Tr(O\cdot \sigma)\cdot I \right)\right| \leq \left\| e^{t^*\cdot \mathcal{L}}[O] - \Tr(O\cdot \sigma)\cdot I \right\| \leq \frac{8}{9} \cdot \epsilon \cdot \|O\|\,,
    \end{align} where the last inequality follows from Theorem \ref{thm: constant mixing of geometrically-local observables} as $t_\textup{mix}^{(O)}\left(\frac{8}{9}\epsilon \right) \leq t^*$. Similarly, we have that \begin{align}\hspace{-1cm}
        \left| \frac{1}{d^{|B_A(r^*)|}} \cdot \Tr_{B_A(r^*)}\left( e^{t^* \cdot\mathcal{L}_{B_A(r^*)}}[O_A] \right) - \frac{1}{d^n}\Tr\left( e^{t^*\cdot \mathcal{L}}[O_A] \right) \right| \leq \left\| e^{t^*\mathcal{L}}[O_A] - e^{t^*\mathcal{L}_{B_A(r^*)}}[O_A]\right\| \leq \frac{\epsilon}{9} \cdot \frac{\|O\|}{\sum_B \|O_B\|} \cdot \|O_A\|\,.\hspace{-1cm}
    \end{align}
    Altogether, we get that \begin{align}
        \left| \Tr(O\cdot \sigma) - \sum_{A\in S} \frac{1}{d^{|B_A(r^*)|}} \cdot \Tr_{B_A(r^*)}\left( e^{t^* \cdot\mathcal{L}_{B_A(r^*)}}[O_A] \right)\right|  \leq \frac{8}{9} \cdot \epsilon \cdot \|O\| +\sum_{A \in S} \frac{\epsilon}{9} \cdot \frac{\|O\|}{\sum_B \|O_B\|} \cdot \|O_A\| = \epsilon \cdot \|O\|\,,
    \end{align} and so the estimate indeed gives the desired error.

    Now observe that the time complexity is simply $|S|$ times the complexity of simulating $e^{t^* \cdot\mathcal{L}_{B_A(r^*)}}[O_A]$. As per Theorem \ref{thm: constant mixing of geometrically-local observables}, we have that $t^*$ always exists and scales like $t^* = \mathcal{O}(\log(1/\epsilon))$ (independently of $n$), and then $r^* = \mathcal{O}(\log(1/\epsilon))$ with $N\coloneq |B_A(r^*)| = \mathcal{O}(\log(1/\epsilon)^D)$. Note that we can classically calculate the exponential of a qudit Lindbladian acting on $N$ sites in time $\mathcal{O}(d^{6N})$ in the worst case (which is most likely very suboptimal). Hence the evaluation of $\frac{1}{d^{|B_A(r^*)|}} \cdot \Tr\left( e^{t^*\cdot \mathcal{L}_{B_A(r^*)}}[O_A] \right)$ takes at most $\mathcal{O}\left(d^{6\cdot \mathcal{O}(\log(1/\epsilon)^D)}\right) = \mathcal{O}\left(e^{\mathcal{O}(\log(1/\epsilon)^D)}\right)$ time, making the overall complexity of the procedure $\mathcal{O}\left(|S| \cdot e^{\mathcal{O}(\log(1/\epsilon)^D)}\right) = \mathcal{O}\left(n \cdot e^{\mathcal{O}(\log(1/\epsilon)^D)}\right)$.
\end{proof}

\begin{remark}
    The same strategy is directly applicable to the quantum algorithms for simulating such Lindbladians, like quantum Gibbs samplers, lowering their end-to-end complexity for estimating steady-state expectation values of geometrically-local observables to $\mathcal{O}(n\cdot\operatorname{poly}(1/\epsilon))$. The quantum simulation then still provides a superpolynomial speed-up with respect to the accuracy $\epsilon$.
\end{remark}

\subsection{Separable Lindbladians}\label{sec: separable qudits}

To obtain further insights into which observables mix in a constant time and which don't, we consider the simple case of $1$-local (separable) Lindbladians acting on qudits with local dimension $d$, providing explicit large families of observables which mix in a constant time, as well as an example of an observable mixing in logarithmic time (Example \ref{example: observable mixing in log time}).

\begin{prop}\label{prop: single site observables under separable Lindbladian}
    Consider a separable Lindbladian $\mathcal{L} = \sum_{i=1}^n \mathcal{L}_i$.
    Single site observables $O_i$ and $1$-local observables $O = \sum_i O_i$ mix in a constant time. Specifically, when $\mathcal{L}$ obeys KMS detailed-balance with $\sigma = \bigotimes_i \sigma_i$, the mixing time is bounded like \begin{equation}
        t_\textup{mix}^{(O)}(\epsilon) \leq \frac{1}{\Delta_0} \log\left(\max_i \left\| \sigma_i^{-1/2}\right\|\cdot\frac{2}{\epsilon} \right)\,,
    \end{equation} where $\Delta_0 = \min_i \Delta_i$ is the spectral gap of $\mathcal{L}$.
\end{prop}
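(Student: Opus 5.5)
The plan is to first handle a single-site observable $O_i$ and then lift the result to $O=\sum_i O_i$ using Lemma~\ref{lemma: mixing time of a sum}. Since $\mathcal{L}=\sum_j\mathcal{L}_j$ with commuting terms acting on different sites, we have $e^{t\mathcal{L}}=\bigotimes_j e^{t\mathcal{L}_j}$. Each $\mathcal{L}_j$ is unital in the Heisenberg picture, so $e^{t\mathcal{L}_j}[I]=I$ and $e^{t\mathcal{L}}[O_i]=e^{t\mathcal{L}_i}[O_i]\otimes I$. The problem therefore reduces exactly to a single qudit of dimension $d$, with $\Tr(O_i\sigma)=\Tr(O_i\sigma_i)$. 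Without KMS this already gives constant mixing: a primitive single-site Lindbladian has some gap and some prefactor, neither depending on $n$. Under uniformity of the family, these are bounded uniformly over sites.

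For the explicit bound under KMS detailed balance, I would argue in the KMS-weighted space $L^2(\sigma_i)$ with inner product $\langle X,Y\rangle_{\sigma_i}=\Tr(X^\dagger\sigma_i^{1/2}Y\sigma_i^{1/2})$. Here $\mathcal{L}_i$ is self-adjoint with gap $\Delta_i$, and it annihilates $I$. Write $O_{i,\perp}=O_i-\Tr(O_i\sigma_i)I$, which is orthogonal to $I$ in this space. Then
\begin{equation}
\|e^{t\mathcal{L}_i}[O_{i,\perp}]\|_{\sigma_i}\le e^{-\Delta_i t}\|O_{i,\perp}\|_{\sigma_i}\le e^{-\Delta_i t}\|O_{i,\perp}\|\le 2e^{-\Delta_i t}\|O_i\|,
\end{equation}
using $\|X\|_{\sigma_i}\le\|X\|$ for a state. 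To return to the operator norm, I would use the comparison
\begin{equation}
\|X\|\le\|X\|_2\le\|\sigma_i^{-1/2}\|\,\|X\|_{\sigma_i}.
\end{equation}
This follows from $\|\sigma_i^{1/4}X\sigma_i^{1/4}\|_2\ge\lambda_{\min}(\sigma_i)^{1/2}\|X\|_2$, since $\|\sigma_i^{-1/2}\|=\lambda_{\min}(\sigma_i)^{-1/2}$. Combining the two displays gives $\|e^{t\mathcal{L}}[O_i]-\Tr(O_i\sigma)I\|\le 2\|\sigma_i^{-1/2}\|e^{-\Delta_0 t}\|O_i\|$. This yields the stated bound for single-site $O_i$.

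The main obstacle is the $1$-local sum. A naive application of Lemma~\ref{lemma: mixing time of a sum} introduces the factor $\sum_i\|O_i\|/\|O\|$, which is absent from the stated bound. I would avoid it by working with the whole product at once. Define $X_i(t)=e^{t\mathcal{L}_i}[O_{i,\perp}]$, which acts on site $i$ only. The norm of a sum of commuting operators on distinct sites is $\|\sum_iX_i\|=\max_{\pm}\sum_i\lambda_{\max/\min}(X_i)$. Using $\lambda_{\max}(X_i)\le\|X_i\|$ alone does not suffice, so instead I would apply the comparison above to the whole sum on the full system. Because the sum consists of centred $1$-local terms, the $\sigma$-weighted norm squared is additive: the cross terms vanish by orthogonality to $I$ under the product state. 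I would also need $\|\sum_iO_{i,\perp}\|_\sigma\le\|O-\Tr(O\sigma)I\|\le 2\|O\|$, which holds since $\sigma$ is a state. The difficulty is that the full-system comparison constant is $\|\sigma^{-1/2}\|=\prod_i\|\sigma_i^{-1/2}\|$, which is exponential in $n$. So I expect to fall back on a site-wise argument. The variance-type quantity reduces to the single-site bounds, with a constant factor such as $2$ absorbed into the logarithm. If that fails, I would settle for the weaker statement via Lemma~\ref{lemma: mixing time of a sum}, whose extra logarithmic factor is constant whenever $\|O\|=\Omega(\sum_i\|O_i\|)$.
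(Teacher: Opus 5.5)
Your single-site argument is correct, and it supplies the derivation that the paper only asserts. You reduce to $e^{t\mathcal{L}_i}$ and use KMS self-adjointness and the gap in $L^2(\sigma_i)$, together with $\|X\|_{\sigma_i}\le\|X\|$ and $\|X\|\le\|\sigma_i^{-1/2}\|\,\|X\|_{\sigma_i}$. This gives exactly $\|e^{t\mathcal{L}}[O_i]-\Tr(O_i\sigma)I\|\le 2\|\sigma_i^{-1/2}\|e^{-\Delta_i t}\|O_i\|$.

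The gap is in the lift to $O=\sum_iO_i$, where you never establish the stated bound. The $\sigma$-weighted global route fails for the reason you give, namely the $\prod_i\|\sigma_i^{-1/2}\|$ constant. The ``site-wise variance'' step is not an actual argument. Your last fallback only proves a conditional statement, under a hypothesis $\|O\|=\Omega(\sum_i\|O_i\|)$ that you leave unverified.

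The missing observation is that this hypothesis holds automatically for $1$-local observables, with constant $1$, once the decomposition is centred. So the ``naive'' use of Lemma~\ref{lemma: mixing time of a sum} you wanted to avoid is exactly what the paper does, and it costs nothing. The $O_i$ commute and act on distinct sites, so $\lambda_{\max}(O)=\sum_i\lambda_{\max}(O_i)$ and $\lambda_{\min}(O)=\sum_i\lambda_{\min}(O_i)$. This is the fact you wrote down for the $X_i$, applied instead to $O$ itself. Shift each $O_i$ by a scalar so that $\lambda_{\max}(O_i)=-\lambda_{\min}(O_i)=\|O_i\|$, and collect the shifts into $cI$, so that $O=\sum_iO_i+cI$. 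Scalars drop out of $e^{t\mathcal{L}}[O]-\Tr(O\sigma)I$, while
\begin{equation}
\|O\|=\max\Bigl\{\bigl|c+\sum_i\|O_i\|\bigr|,\ \bigl|c-\sum_i\|O_i\|\bigr|\Bigr\}=\sum_i\|O_i\|+|c|\ \ge\ \sum_i\|O_i\|\,.
\end{equation}
The triangle inequality and your single-site bound then give
\begin{equation}
\|e^{t\mathcal{L}}[O]-\Tr(O\sigma)I\|\le 2\max_i\|\sigma_i^{-1/2}\|\,e^{-\Delta_0t}\sum_i\|O_i\|\le 2\max_i\|\sigma_i^{-1/2}\|\,e^{-\Delta_0t}\,\|O\|\,.
\end{equation}
This is the stated mixing time, with no extra factor. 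The paper phrases this step as ``$\|O\|=\sum_i\|O_i\|$ due to the $1$-locality'', which is literally true only for such a centred decomposition, but that is all the proof needs.

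The same observation also completes your non-KMS claim for sums. There, the paper gets an explicit prefactor depending only on $d$ and $\Delta_i$ from a spectral bound. Your ``some prefactor, bounded under uniformity'' is an extra assumption rather than a derived fact.
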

\begin{proof}
    Note that for a single-site $O_i$, we have $e^{\mathcal{L} t}[O_i] = e^{\mathcal{L}_i t}[O_i]$. Using \cite[Theorem 3.3]{szehr2015spectral}, we have the (rather pessimistic) upper bound \begin{equation}
        \left\|e^{t\mathcal{L}}[O_i]-\Tr(O_i\cdot \sigma) \cdot I\right\| = \left\|e^{t\mathcal{L}_i}[O_i]-\Tr(O_i\cdot \sigma_i) \cdot I\right\| \leq 2e^{-\Delta_i t} \cdot t^{d^2-1} \cdot e^{\Delta_i (d^2-1)} (e^{-\Delta_i}+2)^{d^2-1}\cdot\|O_i\|\,,
    \end{equation} where $\Delta_i$ is the spectral gap of $\mathcal{L}_i$, meaning that all non-zero eigenvalues $\lambda^{(i)}$ of $\mathcal{L}_i$ obey $\mathrm{Re}(\lambda^{(i)}) \leq -\Delta_i$. Note that we can upper bound $e^{-\Delta_i t} \cdot t^{d^2-1}$ by a pure decaying exponential like \begin{equation}
    e^{-\Delta_i t} \cdot t^{d^2-1} \leq e^{-(\Delta_i-\delta) t} \cdot \max_{t\geq 0} \left(\frac{t^{d^2-1}}{e^{\delta \cdot t}}\right)
    \end{equation} for any $\delta \in (0,\Delta_i)$. This gives us the mixing time bound \begin{equation}
         t_\textup{mix}^{(O_i)}(\epsilon) \leq \min_{\delta \in (0,\Delta_i)} \left(\frac{1}{\Delta_i-\delta}\cdot \log\left(\frac{2}{\epsilon} \cdot \max_{t\geq 0} \left(\frac{t^{d^2-1}}{e^{\delta \cdot t}}\right) \cdot e^{\Delta_i (d^2-1)} (e^{-\Delta_i}+2)^{d^2-1}\right)\right)\,.
    \end{equation}
    Then we have $\|O\| = \sum_i \|O_i\|$ due to the $1$-locality, and hence by Lemma \ref{lemma: mixing time of a sum} we have that \begin{equation}
         t_\textup{mix}^{(O)}(\epsilon)  \leq \sup_i  t_\textup{mix}^{(O_i)}(\epsilon)\,.
    \end{equation}
    
    The situation is much clearer when $\mathcal{L}$ obeys a detailed-balance condition with $\sigma$, for which we have that \begin{align}
        \left\| e^{\mathcal{L}_i t}[O_i] - \Tr_i(O_i \sigma_{i})\cdot I \right\| \leq 2e^{-\Delta_i t} \left\|\sigma_{i}^{-1/2}\right\| \cdot \|O_i\|\,.
    \end{align} This leads to the mixing time bound \begin{equation}
        t_\textup{mix}^{(O)}(\epsilon) \leq \frac{1}{\Delta_0} \log\left(\max_i \left\| \sigma_i^{-1/2}\right\|\cdot\frac{2}{\epsilon} \right)\,.
    \end{equation} In particular, for quantum Gibbs samplers corresponding to the $1$-local qudit Hamiltonian $H=\sum_i h_i$ at inverse temperature $\beta$, using \cite[Lemma III.1]{smid2025rapid}, we get that
    \begin{equation}
        t_\textup{mix}^{(O)}(\epsilon) \leq \frac{1}{\Delta_0} \log\left(\frac{2d^{1/2}e^{\beta\cdot \Delta E_\textup{max}/2}}{\epsilon} \right)\,,
    \end{equation} where $\Delta E_\textup{max} = \max_i \Delta E_i = \max_i (\lambda_\textup{max}(h_i) - \lambda_\textup{min}(h_i))$ is the maximal spectral range of $h_i$'s.
\end{proof}

\begin{prop}
    Consider a separable Lindbladian $\mathcal{L} = \sum_{i=1}^n \mathcal{L}_i$ with a full-rank steady state $\sigma$. Any product observable $O = \bigotimes\limits_{i=1}^n O_i$ mixes in a constant time when evolved under $\mathcal{L}$.
\end{prop}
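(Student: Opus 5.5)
The plan is to exploit the tensor-product structure of both the dynamics and the observable, and to control the decaying part by a telescoping sum instead of a naive binomial expansion. Since $\mathcal{L}=\sum_i\mathcal{L}_i$ with commuting single-site terms, we have $e^{t\mathcal{L}}[O]=\bigotimes_i A_i(t)$ with $A_i(t)=e^{t\mathcal{L}_i}[O_i]$. Moreover $\sigma=\bigotimes_i\sigma_i$, so $\Tr(O\cdot\sigma)=\prod_i c_i$ with $c_i=\Tr(O_i\sigma_i)$. Write $A_i(t)=c_iI+R_i(t)$. As in Proposition \ref{prop: single site observables under separable Lindbladian}, each single-site term satisfies $\|R_i(t)\|\le K e^{-\Delta t}\,\|O_i-c_iI\|$. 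Here $K,\Delta$ are uniform in $i$, obtained via \cite[Theorem 3.3]{szehr2015spectral} together with the absorption of the polynomial prefactor into a slightly smaller rate. I apply the bound to $O_i-c_iI$, which is legitimate since $e^{t\mathcal{L}_i}$ fixes $I$. I will assume the $O_i$ are Hermitian; the general case follows by splitting into Hermitian and anti-Hermitian parts or by the same argument with singular values.

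First I telescope:
\begin{equation}
    \bigotimes_i A_i - \prod_i c_i\, I=\sum_{k=1}^n A_1\otimes\cdots\otimes A_{k-1}\otimes R_k\otimes c_{k+1}I\otimes\cdots\otimes c_nI\,.
\end{equation}
Using contractivity $\|A_j\|\le\|O_j\|$ and $\|O\|=\prod_j\|O_j\|$, this gives
\begin{equation}
    \frac{\|e^{t\mathcal{L}}[O]-\Tr(O\sigma)I\|}{\|O\|}\le \sum_{k}\delta_k(t)\prod_{j>k}a_j\,,
\end{equation}
where $a_j=|c_j|/\|O_j\|\le1$ and $\delta_k(t)=\|R_k(t)\|/\|O_k\|\le Ke^{-\Delta t}b_k$, with $b_k=\|O_k-c_kI\|/\|O_k\|$.

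The naive estimate $a_j\le1$ only gives a bound of order $nKe^{-\Delta t}$, hence a $\log n$ mixing time. This is the main obstacle, and the full-rank assumption is exactly what removes it. Let $\mu=\min_i\lambda_{\min}(\sigma_i)>0$, which is uniform for a uniform family. I will show that $a_j\le 1-\kappa\mu\, b_j$ for a constant $\kappa>0$ (something like $\kappa=1/2$). The reasoning is as follows. $c_j$ is a convex combination of the eigenvalues of $O_j$ in which every eigenvector carries weight at least $\mu$. Hence $c_j$ stays a distance of order $\mu\cdot\operatorname{spread}(O_j)$ away from both extreme eigenvalues. Finally, $\|O_j-c_jI\|$ is bounded by the spread, so $|c_j|\le\|O_j\|-\kappa\mu\|O_j-c_jI\|$. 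Checking the constant in this elementary eigenvalue inequality is the one computation I would do carefully.

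Given this, set $x_k=\kappa\mu b_k\in[0,1]$ to obtain
\begin{equation}
    \sum_k\delta_k\prod_{j>k}a_j\le\frac{Ke^{-\Delta t}}{\kappa\mu}\sum_k x_k\prod_{j>k}(1-x_j)=\frac{Ke^{-\Delta t}}{\kappa\mu}\Big(1-\prod_j(1-x_j)\Big)\le\frac{Ke^{-\Delta t}}{\kappa\mu}\,.
\end{equation}
The equality is itself a telescoping identity. This yields $t_\textup{mix}^{(O)}(\epsilon)\le\frac1\Delta\log\frac{K}{\kappa\mu\epsilon}$, which is independent of $n$.
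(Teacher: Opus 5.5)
Your proof is correct, and it takes a genuinely different route from the paper's. The step you flagged holds with $\kappa=1$. Let $w_k=\langle v_k|\sigma_j|v_k\rangle\ge\mu$ be the weights of the eigenvectors of $O_j$ and $s$ its spectral spread. Then $c_j\in[\lambda_{\min}(O_j)+\mu s,\,\lambda_{\max}(O_j)-\mu s]$, so $|c_j|\le\|O_j\|-\mu s\le\|O_j\|-\mu\|O_j-c_jI\|$.

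The paper proceeds differently. It normalises $\|O_i\|=1$ and uses full rank only qualitatively, to get $|o_i|<1$. It then works directly with the spectrum. If some $o_j=0$, it bounds $\|O(t)\|\le\|O_j(t)\|$. Otherwise, after sign flips, it sandwiches $\operatorname{spec}(O(t))$ between $\prod_i(o_i\pm c_ie^{-\Delta_i t})$ and solves for explicit times $t_0,t_1,t_2$. The resulting bound depends on $o_{\min}$, $o_{\max}$ and the number $k(n)$ of non-trivial factors. It stays $n$-uniform only if $o_{\max}$ is bounded away from $1$ and, when no $o_i$ vanishes, $o_{\min}$ is bounded away from $0$; this is an implicit assumption on the observable.

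Your double telescoping removes all case distinctions and gives $\|e^{t\mathcal{L}}[O]-\Tr(O\sigma)I\|\le (K/\mu)\,e^{-\Delta t}\|O\|$. Its uniformity rests only on $\min_i\lambda_{\min}(\sigma_i)\ge\mu$, which is a property of the Lindbladian family. It therefore covers factors such as $O_i=I+\eta_iZ$ with $\eta_i\to0$, where the paper's bound as written degenerates because $o_{\max}\to1$.

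The price is that bounding $\|A_j(t)\|\le\|O_j\|$ for $j<k$ discards the decay of the factors themselves. You therefore lose the paper's finer observation that, as $k(n)\to\infty$, the bound becomes independent of $\epsilon$. That is what Remark~\ref{remark: constant mixing time with increasing accuracy} and the following corollary use to allow $\epsilon=1/\operatorname{poly}(n)$; there your bound would only give $\mathcal{O}(\log n)$.

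A minor point: splitting non-Hermitian $O_i$ into Hermitian and anti-Hermitian parts would destroy the product structure. You do not need it, since a Hermitian product operator can always be rewritten with Hermitian factors by redistributing phases.
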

\begin{proof}
    Without loss of generality, assume $\|O_i\| = 1$ for all $i$, and hence $\|O\|=1$. Further, denote the number of sites $i$ such that $O_i \neq \pm I$ (where we consider only hermitian $O_i$) as $k(n)$, where $k(n) \in [n] = \{1,\dots,n\}$ might or might not depend on $n$. We can restrict our view to these $k(n)$ sites, as the others won't be affected by $\mathcal{L}$ and won't contribute to the mixing. We have that $O(t) = e^{t\mathcal{L}}[O] = \bigotimes\limits_{i=1}^{k(n)} O_i(t)$ due to the separability of $\mathcal{L}$. Denote by $o_i = \Tr(O_i \cdot\sigma)$ the expectation value of $O_i$ in the steady state, i.e.~have that $O_i(t) \to o_i \cdot I$ as $t \to \infty$. As $\|O_i\| = 1$, we have that $|o_i| \leq 1$, and since $O_i \not\propto I$ and $\sigma$ is full-rank, we actually get a strict inequality $|o_i| < 1$. Then we have that $O(t) \to \prod_i o_i \cdot I$, and so the mixing time of $O$ is the first time $t$ such that $\|O(t) - \prod_i o_i \cdot I\| \leq \epsilon$. Each $O_i$ also mixes in a constant time as per Proposition \ref{prop: single site observables under separable Lindbladian}, $\|O_i(t)-o_i \cdot I\| \leq c_i e^{-\Delta_i t} \|O_i\| = c_i e^{-\Delta_i t}$, and so we have that \begin{equation}
        \operatorname{spec}(O_i(t)) \subseteq \left[o_i - c_i e^{-\Delta_i t}, o_i + c_i e^{-\Delta_i t}\right]\,.
    \end{equation}
        
    As the first case, assume that $O$ has zero expectation value in $\sigma$, i.e.~that there exists $i$ with $o_i=0$, say $o_j=0$. Then this spectral norm is simply \begin{equation}
        \|O(t) - \prod_i o_i \cdot I\| = \|O(t)\| = \prod_i \|O_i(t)\|\,.
    \end{equation} Due to contractivity of the evolution, we have that $\|O_i(t)\| \leq \|O_i\|=1$, and so \begin{equation}
        \prod_i \|O_i(t)\| = \|O_j(t)\| \cdot  \prod_{i\neq j} \|O_i(t)\| \leq \|O_j(t)\| \leq c_j e^{-\Delta_j t} \overset{\text{set}}{\leq} \epsilon\,,
    \end{equation} and so we get that $t_\textup{mix}^{(O)}(\epsilon) \leq \frac{1}{\Delta_j} \log\left(\frac{c_j}{\epsilon}\right)$. While this bound is sufficient to obtain a constant mixing time, for future convenience, we would also like to understand the scaling with respect to $k(n)$, which this bound doesn't cover. For this, denote the number of terms with $o_i = 0$ by $l$, so that the remaining $k(n)-l$ non-trivial terms have $o_i > 0$ (positivity can be assumed without loss of generality as explained below in the second case, where all $o_i \neq 0$). Hence we obtain the bound \begin{equation}
         \prod_i \|O_i(t)\| \leq (c_\textup{max}\cdot e^{-\Delta_\textup{min} t})^l \cdot (o_\textup{max}+c_\textup{max}\cdot e^{-\Delta_\textup{min} t})^{k(n)-l}\,.
    \end{equation} Then for all $t\geq \frac{1}{\Delta_\textup{min}}\log\left(\frac{2c_\textup{max}}{1-o_\textup{max}}\right)$, we have that \begin{equation}
        (c_\textup{max}\cdot e^{-\Delta_\textup{min} t})^l \cdot (o_\textup{max}+c_\textup{max}\cdot e^{-\Delta_\textup{min} t})^{k(n)-l} \leq  (c_\textup{max}\cdot e^{-\Delta_\textup{min} t})^l \cdot \left(\frac{1+o_\textup{max}}{2}\right)^{k(n)-l} \overset{\text{set}}{\leq} \epsilon\,.
    \end{equation} Setting this smaller to $\epsilon$ gives \begin{equation}
        t \geq \frac{1}{\Delta_\textup{min}}\log\left(c_\textup{max} \left(\frac{1+o_\textup{max}}{2}\right)^{\frac{k(n)}{l}-1} \cdot \frac{1}{\epsilon^{1/l}}\right)\,,
    \end{equation} which means we obtain the following bound on the mixing time: \begin{equation}
        t_\textup{mix}^{(O)}(\epsilon) \leq \frac{1}{\Delta_\textup{min}}\log\left(c_\textup{max}  \cdot \max\left\{\frac{2}{1-o_\textup{max}},\,\left(\frac{1+o_\textup{max}}{2}\right)^{\frac{k(n)}{l}-1} \cdot \frac{1}{\epsilon^{1/l}} \right\}\right)\,.
    \end{equation} Now observe, that for a fixed $\epsilon$, as $k(n) \to \infty$ (independently of the behaviour of $l \in [k(n)]$), we have that this upper bound behaves like $\sim \frac{1}{\Delta_\textup{min}}\log\left(\frac{2c_\textup{max}}{1-o_\textup{max}}\right)$, which is interestingly independent of $\epsilon$. This behaviour will be expanded upon at the end of this proof. 

    As the second case, we can assume that $o_i > 0$ for all $i$, as if there were negative values of some $o_i$'s, we can simply map the corresponding observables to their negatives, $O_i \mapsto -O_i$, which doesn't change the mixing properties. The spectral norm of interest is then simply \begin{align}
        \|O(t) - \prod_i o_i \cdot I\| &= \max\left\{\left|\lambda_\textup{max}(O(t)) -\prod_i o_i \right|,  \left|\lambda_\textup{min}(O(t)) -\prod_i o_i \right| \right\}\\
        &=\max\left\{\lambda_\textup{max}(O(t)) -\prod_i o_i ,  \prod_i o_i - \lambda_\textup{min}(O(t)) \right\}\,,
    \end{align} where $\lambda_\textup{max}(O(t))$ and $\lambda_\textup{min}(O(t))$ denote the highest and lowest eigenvalue of $O(t)$ respectively. We have that \begin{equation}
        \lambda_\textup{max}(O(t)) -\prod_i o_i \leq \prod_i (o_i + c_i e^{-\Delta_i t}) - \prod_i o_i\,.
    \end{equation} Note that $\prod_i (o_i + c_i e^{-\Delta_i t}) - \prod_i o_i \leq \epsilon$ is equivalent to $\prod_i (1 + \frac{c_i}{o_i} e^{-\Delta_i t}) - 1 \leq \prod_i o_i^{-1} \cdot \epsilon$, and since $\prod_i o_i^{-1} \cdot \epsilon \geq o_\textup{max}^{-k(n)} \cdot \epsilon$ and $\prod_i (1 + \frac{c_i}{o_i} e^{-\Delta_i t}) \leq (1 + \frac{c_\textup{max}}{o_\textup{min}} e^{-\Delta_\textup{min} t})^{k(n)}$, setting \begin{equation}
        \left(1 + \frac{c_\textup{max}}{o_\textup{min}} e^{-\Delta_\textup{min} t}\right)^{k(n)} - 1 \overset{\text{set}}{\leq}  o_\textup{max}^{-k(n)} \cdot \epsilon
    \end{equation} implies the desired inequality. Solving this yields \begin{equation}
        t \geq t_1 \coloneqq \frac{1}{\Delta_\textup{min}} \log\left( \frac{c_\textup{max}}{o_\textup{min}} \frac{1}{\left(1+\epsilon \cdot o_\textup{max}^{-k(n)}\right)^{1/k(n)}-1} \right)\,.
    \end{equation} 
    Now consider $t \geq t_0 \coloneqq \frac{1}{\Delta_\textup{min}} \log\left( \frac{c_\textup{max}}{o_\textup{min}}\right)$ so that $o_i - c_i e^{-\Delta_i t} \geq 0$ for all $i$, and hence $\lambda_\textup{min}(O(t)) \geq \prod_i (o_i - c_i e^{-\Delta_i t})$ (here, $c_\textup{max}$ can be always assumed to be greater or equal to $1$, and so $t_0 >0$). Then for all $t\geq t_0$, we find that \begin{equation}
        \prod_i o_i - \lambda_\textup{min}(O(t)) \leq  \prod_i o_i - \prod_i (o_i - c_i e^{-\Delta_i t})\,.
    \end{equation} Similarly to before, we can show that for any \begin{equation}
        t \geq t_2 \coloneqq \frac{1}{\Delta_\textup{min}} \log\left( \frac{c_\textup{max}}{o_\textup{min}} \frac{1}{1-(1-\epsilon \cdot o_\textup{max}^{-k(n)})^{1/k(n)}} \right) \geq t_0\,,
    \end{equation} we have $\prod_i o_i - \prod_i (o_i - c_i e^{-\Delta_i t}) \leq \epsilon$. Note that for $\epsilon >o_\textup{max}^{k(n)}$, the desired inequality becomes vacuously true, and then $t_2 = t_0$ surely suffices. 
    Finally, observe that for any $\epsilon < o_\textup{max}^{k(n)}$, we have $t_1 \geq t_2$, which follows from the inequality $(1+x)^{1/m} + (1-x)^{1/m} \leq 2$ (which holds for all $x \in [-1,1]$ and $m \in \mathbb{N}$) by taking $x = \epsilon\cdot o_\textup{max}^{-k(n)} \in (0,1]$ and $m=k(n)$.
    Altogether, this means that for all $t \geq \max\{t_0,t_1\}$, $\|O(t) - \prod_i o_i \cdot I\|\leq \epsilon$, yielding the following bound on the mixing time: \begin{equation}
        t_\textup{mix}^{(O)}(\epsilon) \leq \max\{t_0,t_1\} = \frac{1}{\Delta_\textup{min}} \log\left( \frac{c_\textup{max}}{o_\textup{min}} \cdot \max\left\{\frac{1}{\left(1+\epsilon\cdot o_\textup{max}^{-k(n)} \right)^{1/k(n)}-1},\,1 \right\}\right)\,.
    \end{equation}
    Note that, if $k(n)$ is a bounded function of $n$, then this clearly gives an upper bound on the mixing time uniform in $n$. Now consider $k(n)\to \infty$ as $n\to\infty$, then for a fixed $\epsilon$ and $n \to \infty$, this behaves like $\max\{t_0,t_1\}\sim \frac{1}{\Delta_\textup{min}} \log\left( \frac{c_\textup{max}}{o_\textup{min}} \cdot \max\left\{\frac{o_\textup{max}}{1-o_\textup{max}},1 \right\} \right)$, while for a fixed $n$ and $\epsilon \to 0$ like $\max\{t_0,t_1\}\sim \frac{1}{\Delta_\textup{min}} \log\left( \frac{c_\textup{max}}{o_\textup{min}} \frac{k(n)}{o_\textup{max}^{-k(n)}}\cdot\frac{1}{\epsilon} \right)$. In either case, we find a mixing time which is upper bounded uniformly in $n$. Interestingly enough, as the asymptotic for a fixed $\epsilon$ and $k(n) \to \infty$ happens to be independent of $\epsilon$, then we might see a mixing time which after some initial increasing behaviour starts decreasing towards this value as you increase the system size $n$, or is decreasing immediately from $n=1$. Observe that as $\epsilon \to 0$, the maximum of $\max\{t_0,t_1\}$ occurs at $k(n) \sim \frac{1}{-\log(o_\textup{max})}$ (which can be smaller than $1$), where we have $\max\{t_0,t_1\} \sim \frac{1}{\Delta_\textup{min}} \left( \log(\frac{c_\textup{max}}{o_\textup{min}} \cdot \frac{1}{-\log(o_\textup{max})}\cdot\frac{1}{\epsilon})-1\right)$.
\end{proof}

\begin{remark}\label{remark: constant mixing time with increasing accuracy}
    This bound on the mixing time for extensive product observables further means that we end up with a constant mixing time even for a sub-exponentially decreasing error $\epsilon = \epsilon(n)$. In particular, if $k(n) = \Theta(n)$, then for any $\epsilon = 1/\operatorname{poly}(n)$ we still get a a system-size-independent bound on the mixing time.
\end{remark}

\begin{corollary}
    Consider a separable Lindbladian $\mathcal{L}$ with a full-rank steady state. Any observable $O = \sum_{i \in S} O_i$, with $O_i$ being product observables with extensive support $\operatorname{supp}(O_i) = \Theta(n)$, $|S| = \operatorname{poly}(n)$, and $\|O\| = \Omega(1/\operatorname{poly(n)})$, mixes in a constant time.
\end{corollary}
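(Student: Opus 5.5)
The plan is to combine Lemma \ref{lemma: mixing time of a sum} with the product-observable bound from the preceding proposition and Remark \ref{remark: constant mixing time with increasing accuracy}. By Lemma \ref{lemma: mixing time of a sum},
\begin{equation}
    t_\textup{mix}^{(O)}(\epsilon) \leq \sup_{i\in S} t_\textup{mix}^{(O_i)}\left(\epsilon \cdot \frac{\|O\|}{\sum_{j} \|O_j\|}\right)\,,
\end{equation}
so the task reduces to showing that each extensive product term $O_i$ mixes in system-size-independent time even at the shrunken accuracy $\epsilon' = \epsilon\,\|O\|/\sum_j\|O_j\|$.

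The first step is to control $\epsilon'$. After normalising each $O_i$ so that its single-site factors have unit norm (absorbing scalars), we get $\sum_j\|O_j\| \leq |S|\max_j\|O_j\|$. Rescaling so that $\max_j\|O_j\|=1$ rescales $\|O\|$ by the same factor, and the hypothesis should be read in this normalisation. Then $\|O\| = \Omega(1/\operatorname{poly}(n))$ and $|S| = \operatorname{poly}(n)$ give $\epsilon' \geq \epsilon/\operatorname{poly}(n)$.

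The second step applies the explicit bound from the preceding proposition to each $O_i$, with $k(n)=\Theta(n)$ nontrivial sites. In the case where some $o_i=0$, the bound contains $(\tfrac{1+o_\textup{max}}{2})^{k/l-1}\epsilon'^{-1/l}$. Because the non-identity factors are unit norm, one should first check that all factors satisfy $\|O_{i,x}(t)\|\leq 1$. Then
\begin{equation}
    \|O_i(t)\| \leq (c_\textup{max}e^{-\Delta_\textup{min}t})^{l}\left(\tfrac{1+o_\textup{max}}{2}\right)^{k-l} \leq \left(\max\{c_\textup{max}e^{-\Delta_\textup{min}t},\tfrac{1+o_\textup{max}}{2}\}\right)^{k}\,,
\end{equation}
which at a constant $t$ is $e^{-\Omega(n)}$ and hence below any $1/\operatorname{poly}(n)$. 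In the case where all $o_i\neq 0$, one uses $t_1$. With $x=\epsilon' o_\textup{max}^{-k}$, once $\epsilon'\gg o_\textup{max}^{k}$ (true for large $n$, since $o_\textup{max}^{k}$ decays exponentially while $\epsilon'$ decays only polynomially), we have $(1+x)^{1/k}-1 \geq \log(1+x)/k \geq \Omega(k\log(1/o_\textup{max}))/k = \Omega(1)$. Hence $t_1$, and therefore $\max\{t_0,t_1\}$, is bounded uniformly in $n$, as anticipated in Remark \ref{remark: constant mixing time with increasing accuracy}. For the finitely many small $n$ below this threshold, the bound is just some finite constant.

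The main obstacle is uniformity of the constants $c_\textup{max}$, $\Delta_\textup{min}$, $o_\textup{max}<1$ and $o_\textup{min}>0$ across all sites and all terms $O_i$, for every $n$. If $o_\textup{max}\to 1$ or $o_\textup{min}\to 0$ along the family, the exponential-versus-polynomial comparison breaks down. I would handle this by making explicit the implicit assumption of a uniform family: the local single-site factors are drawn from a fixed finite set, or a compact set bounded away from multiples of the identity, and the $\mathcal{L}_i$ have uniformly bounded gaps and $\sigma_i$ uniformly full rank. Under that assumption, Proposition \ref{prop: single site observables under separable Lindbladian} gives uniform $c_i,\Delta_i$, and compactness together with full rank gives $o_\textup{max}<1$ uniformly. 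I would also sanity-check the mixed-sign reduction $O_{i,x}\mapsto -O_{i,x}$: flipping factors only flips the sign of $O_i$, which does not affect its individual mixing time, so the sum lemma still applies.
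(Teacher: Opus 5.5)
Your proposal is correct and is essentially the paper's own argument: Lemma~\ref{lemma: mixing time of a sum} reduces the claim to the extensive product terms at accuracy $\epsilon\|O\|/\sum_j\|O_j\|\geq\epsilon/\operatorname{poly}(n)$, and Remark~\ref{remark: constant mixing time with increasing accuracy} bounds their mixing times uniformly in $n$; you verify that remark explicitly, whereas the paper only cites it. On the uniformity issue you raise, you do not actually need $o_\textup{min}$: at $t=\frac{1}{\Delta_\textup{min}}\log\frac{2c_\textup{max}}{1-o_\textup{max}}$ the triangle inequality gives $\|O_i(t)-\prod_x o_x\cdot I\|\leq\|O_i(t)\|+\prod_x|o_x|\leq 2\bigl(\frac{1+o_\textup{max}}{2}\bigr)^{k(n)}$, which covers both of your cases at once, so only uniform $c_\textup{max}$, $\Delta_\textup{min}$ and $o_\textup{max}<1$ are required.
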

\begin{proof}
    By the assumptions, we have that $\frac{\|O\|}{\sum_i \|O_i\|} = 1/\operatorname{poly}(n)$. Hence by Remark \ref{remark: constant mixing time with increasing accuracy}, we have that $t_\textup{mix}^{(O_i)}\left(\epsilon \cdot \frac{\|O\|}{\sum_i \|O_i\|}\right)$ is upper bounded by a constant for each $i$, and so by Lemma \ref{lemma: mixing time of a sum} we have that \begin{equation}
        t_\textup{mix}^{(O)}(\epsilon) \leq \sup_i t_\textup{mix}^{(O_i)}\left(\epsilon \cdot \frac{\|O\|}{\sum_i \|O_i\|}\right)\,,
    \end{equation} which is system-size-independent. 
\end{proof}

\begin{example}[Toy example of an observable with logarithmic mixing time]\label{example: observable mixing in log time}
    Consider the (1-local) depolarising channel $e^{t\mathcal{L}}$ generated $\mathcal{L} = \sum_i \mathcal{L}_i$ acting on $n$ qubits, where $\mathcal{L}_i[O] = \frac{1}{4}(X_i O X_i + Y_i O Y_i + Z_i O Z_i - 3\cdot O)$. Fix a desired accuracy $\epsilon \in (0,1)$. Then the majority observable \begin{equation}
        O = \sum_{x \in B_n} \operatorname{sgn}(n-2|x|) \cdot |x\rangle\langle x|\,,
    \end{equation} where $|x|$ denotes the Hamming weight of the bit string $x$ and $\operatorname{sgn}$ denotes the sign function, will mix in a logarithmic time lower bounded as \begin{equation}
        t_\textup{mix}^{(O)}(\epsilon) \geq \frac{1}{2} \log\left( \frac{1}{1-(1-\epsilon)^{1/\lceil n/2\rceil}}\right) \sim \frac{1}{2} \log\left(\frac{n}{2\epsilon} \right)\,.
    \end{equation}
\end{example}
\begin{proof}
    Observe that $O$ is a diagonal observable, with diagonal elements $\pm 1$ depending whether or not the bit string $x$ corresponding to the basis state $|x\rangle$ contains more zeros or ones respectively. In case of a possible tie, there would also be zeros on the diagonal, however, we can for simplicity think of only odd $n$'s so that ties aren't possible, making $O$ also unitary (although the rest of the proof will be adapted for all $n \in \mathbb{N}$). Hence we immediately get that $\|O\| = 1$, and also that $O(t)$ converges to $0 = \Tr\left(O \cdot \frac{I}{2^n}\right) \propto \Tr(O)$. As such, the mixing condition simplifies to $\|O(t)\| \leq \epsilon$.

    Observe that $\mathcal{L}_i [I_i] = 0$ and $\mathcal{L}_i [P_i] = -P_i$ for a Pauli $P_i$, hence get that $e^{t\mathcal{L}_i}[|x_i\rangle\langle x_i|] = \frac{I_i}{2} +(-1)^{x_i} \frac{e^{-t}}{2} Z_i$. In particular, we get that \begin{equation}
        O(t) = \sum_{x_1,\dots,x_n = 0}^1 \operatorname{sgn}(n-2\cdot(x_1+\dots+x_n)) \cdot \left(\frac{I}{2} +(-1)^{x_1} \frac{e^{-t}}{2} Z \right) \otimes \dots \otimes \left(\frac{I}{2} +(-1)^{x_n} \frac{e^{-t}}{2} Z \right)\,.
    \end{equation} Since this is a sum of diagonal terms, we see that $O(t)$ remains being diagonal throughout its evolution. Its diagonal elements are then \begin{align}\hspace{-1cm}
        \langle y|O(t)|y\rangle &= \sum_{x_1,\dots,x_n = 0}^1 \operatorname{sgn}(n-2\cdot(x_1+\dots+x_n)) \cdot \langle y_1 | \left(\frac{I}{2} +(-1)^{x_1} \frac{e^{-t}}{2} Z \right)|y_1\rangle  \cdots  \langle y_n|\left(\frac{I}{2} +(-1)^{x_n} \frac{e^{-t}}{2} Z \right)|y_n\rangle\hspace{-1cm} \\
        &= \left( \frac{1+e^{-t}}{2}\right)^n \cdot \sum_{x \in B_n} \operatorname{sgn}(n-2|x|) \cdot \left( \frac{1-e^{-t}}{1+e^{-t}}\right)^{|x-y|}\\
        &= \left( \frac{1+e^{-t}}{2}\right)^n \cdot \sum_{k=0}^n \left( \frac{1-e^{-t}}{1+e^{-t}}\right)^{k}  \cdot \sum_{\substack{x \in B_n\\ |x-y| = k}} \operatorname{sgn}(n-2|x|)\,,
    \end{align} where $|x-y|$ denotes the Hamming distance between $x$ and $y$. Due to the symmetry of this expression, observe that $\langle y|O(t)|y\rangle = - \langle \overline{y}|O(t)|\overline{y}\rangle$, where $\overline{y}$ denotes the bitwise complement to $y$. Further, as $q=\frac{1-e^{-t}}{1+e^{-t}} \in [0,1)$, so that the most weight lies in small $k$'s, it will follow that the largest diagonal element occurs at $y=0$. To see why, first split the sum over $k$ into pairs $k=l$ and $k=n-l$ as follows: \begin{align}
        \sum_{k=0}^n q^{k}  \cdot \sum_{\substack{x \in B_n\\ |x-y| = k}} \operatorname{sgn}(n-2|x|) &= \sum_{l=0}^{\lfloor n/2\rfloor} \left( q^l \cdot \sum_{\substack{x \in B_n\\ |x-y| = l}} \operatorname{sgn}(n-2|x|) + q^{n-l} \cdot \sum_{\substack{x \in B_n\\ |x-y| = n-l}} \operatorname{sgn}(n-2|x|)\right)\\
        &= \sum_{l=0}^{\lfloor n/2\rfloor} \left( q^l \cdot \sum_{\substack{x \in B_n\\ |x-y| = l}} \operatorname{sgn}(n-2|x|) + q^{n-l} \cdot \sum_{\substack{x \in B_n\\ |x-\overline{y}| = l}} \operatorname{sgn}(n-2|x|)\right)\\
        &=  \sum_{l=0}^{\lfloor n/2\rfloor} \left( (q^l-q^{n-l}) \cdot \sum_{\substack{x \in B_n\\ |x-y| = l}} \operatorname{sgn}(n-2|x|) \right)\,,
    \end{align} where we have temporarily assumed $n$ to be odd. As $q^l - q^{n-l} >0$, each of these terms will be maximised when $ \sum_{\substack{x \in B_n\\ |x-y| = l}} \operatorname{sgn}(n-2|x|)$ is maximised, which in turn achieves its overall maximum whenever $|y| < \frac{n}{2}-l$. As $l$ goes up to $\lfloor n/2 \rfloor$, taking $y=0$ maximises all these terms individually (and is in fact the only option that does that), and hence maximises the whole expression. For $n$ being even, we would be further left with a middle term $q^{n/2} \cdot  \sum_{\substack{x \in B_n\\ |x-y| = n/2}} \operatorname{sgn}(n-2|x|)$, which is actually equal to $0$ for all $y$, as $\sum_{\substack{x \in B_n\\ |x-y| = n/2}} \operatorname{sgn}(n-2|x|) = \sum_{\substack{x \in B_n\\ |x-\overline{y}| = n/2}} \operatorname{sgn}(n-2|x|) = -\sum_{\substack{x \in B_n\\ |x-y| = n/2}} \operatorname{sgn}(n-2|x|)$.

    This gives us the spectral norm of $O(t)$ as \begin{align}
        \|O(t)\| &= \left( \frac{1+e^{-t}}{2}\right)^n \cdot \sum_{k=0}^n \left( \frac{1-e^{-t}}{1+e^{-t}}\right)^{k}  \cdot \sum_{\substack{x \in B_n\\ |x| = k}} \operatorname{sgn}(n-2|x|)\\
        &= \left( \frac{1+e^{-t}}{2}\right)^n \cdot \sum_{k=0}^n \left( \frac{1-e^{-t}}{1+e^{-t}}\right)^{k}  \cdot {n \choose k}\cdot \operatorname{sgn}(n-2k)\\
        &= \left( \frac{1+e^{-t}}{2}\right)^n \cdot \left( \sum_{k=0}^{\lfloor n/2 \rfloor} \left( \frac{1-e^{-t}}{1+e^{-t}}\right)^{k}  \cdot {n \choose k} - \sum_{k=\lceil n/2\rceil}^{n} \left( \frac{1-e^{-t}}{1+e^{-t}}\right)^{k}  \cdot {n \choose k}\right)\,,\label{eqn: example of log mixing time, spectral norm}
    \end{align} where, now, in case of even $n$, the middle term simply cancels out. Let's define $S(q) = \sum_{k=0}^{\lfloor n/2 \rfloor} q^k  \cdot {n \choose k}$ and assume $n$ to be odd. Then observe that $S(q) + q^n \cdot S(1/q) = (1+q)^n$, which follows from the binomial theorem together with the symmetry of binomial coefficients. Note that we can upper bound $S(1/q) \leq 2^{n-1} \cdot q^{-\lfloor n/2\rfloor}$, and hence we can lower bound $S(q) = (1+q)^n - q^n \cdot S(1/q) \geq (1+q)^n - 2^{n-1} \cdot q^{\lceil n/2\rceil}$. Using this inequality, we obtain a lower bound on the spectral norm of $O(t)$ as \begin{align}
         \|O(t)\| &=\left( \frac{1+e^{-t}}{2}\right)^n \cdot \left( 2\sum_{k=0}^{\lfloor n/2 \rfloor} \left( \frac{1-e^{-t}}{1+e^{-t}}\right)^{k}  \cdot {n \choose k} - \left(\frac{2}{1+e^{-t}}\right)^n\right)\\
         &\geq \left( \frac{1+e^{-t}}{2}\right)^n \cdot \left( \left(\frac{2}{1+e^{-t}}\right)^n - 2^n \cdot \left(\frac{1-e^{-t}}{1+e^{-t}}\right)^{\lceil n/2\rceil}\right)\\
         &= 1-(1+e^{-t})^{\lfloor n/2\rfloor} \cdot (1-e^{-t})^{\lceil n/2\rceil}\,,
    \end{align} which is further a strict inequality for any $t>0$. Now this inequality has been shown only for odd $n$, however, observe that the spectral norm actually stays the same when going from an odd $n$ to an even $n+1$, and hence we get that \begin{equation}
        \|O(t)\| \geq  1-(1+e^{-t})^{\lceil n/2\rceil-1} \cdot (1-e^{-t})^{\lceil n/2\rceil}
    \end{equation} for all $n \in \mathbb{N}$. To see that the spectral norm stays the same, observe that \begin{align}
        \hspace{-1cm}\|O(t)\|_{\text{even } n+1} &= \left( \frac{1+e^{-t}}{2}\right)^n \cdot \left( \frac{1+e^{-t}}{2}\right) \cdot \left( \sum_{k=0}^{\frac{n+1}{2}} \left( \frac{1-e^{-t}}{1+e^{-t}}\right)^{k}  \cdot {n+1 \choose k} - \sum_{k=\frac{n+1}{2}}^{n+1} \left( \frac{1-e^{-t}}{1+e^{-t}}\right)^{k}  \cdot {n+1 \choose k}\right)\\
        &= \left( \frac{1+e^{-t}}{2}\right)^n \cdot \left( \frac{1+e^{-t}}{2}\right) \cdot \left( \sum_{k=0}^{\frac{n-1}{2}} \left( \frac{1-e^{-t}}{1+e^{-t}}\right)^{k}  \cdot {n+1 \choose k} - \sum_{k=\frac{n+3}{2}}^{n+1} \left( \frac{1-e^{-t}}{1+e^{-t}}\right)^{k}  \cdot {n+1 \choose k}\right)\\
        &= \left( \frac{1+e^{-t}}{2}\right)^n \cdot \left( \frac{1+e^{-t}}{2}\right) \cdot \left( \sum_{k=0}^{\frac{n-1}{2}} \left( \frac{1-e^{-t}}{1+e^{-t}}\right)^{k}  \cdot {n \choose k} - \sum_{k=\frac{n+3}{2}}^{n} \left( \frac{1-e^{-t}}{1+e^{-t}}\right)^{k}  \cdot {n \choose k}\right.\\ &\left.\qquad +\sum_{k=1}^{\frac{n-1}{2}} \left( \frac{1-e^{-t}}{1+e^{-t}}\right)^{k}  \cdot {n \choose k-1} - \sum_{k=\frac{n+3}{2}}^{n} \left( \frac{1-e^{-t}}{1+e^{-t}}\right)^{k}  \cdot {n \choose k-1} - \left(\frac{1-e^{-t}}{1+e^{-t}}\right)^{n+1}\right)\\
        &= \left( \frac{1+e^{-t}}{2}\right)^n \cdot \left( \frac{1+e^{-t}}{2}\right) \cdot \left( \frac{1-e^{-t}}{1+e^{-t}}+1\right) \cdot \left( \sum_{k=0}^{\frac{n-1}{2}} \left( \frac{1-e^{-t}}{1+e^{-t}}\right)^{k}  \cdot {n \choose k} - \sum_{k=\frac{n+1}{2}}^{n} \left( \frac{1-e^{-t}}{1+e^{-t}}\right)^{k}  \cdot {n \choose k}\right)\hspace{-1cm}\\
        &= \|O(t)\|_{\text{odd } n}\,.
    \end{align}
    (The spectral norm $\|O(t)\|$ is in fact a non-decreasing function of $n$, and strictly increasing on odd $n$'s for any $t>0$.)
    Finally, we can bound \begin{equation}
        1-(1+e^{-t})^{\lceil n/2\rceil -1} \cdot (1-e^{-t})^{\lceil n/2\rceil} \geq 1-(1+e^{-t})^{\lceil n/2\rceil} \cdot (1-e^{-t})^{\lceil n/2\rceil} = 1-(1-e^{-2t})^{\lceil n/2\rceil} \overset{\textup{set}}{>} \epsilon
    \end{equation} and set this greater to $\epsilon$, which then implies $
        t < t_1 \coloneqq \frac{1}{2} \log\left( \frac{1}{1-(1-\epsilon)^{1/\lceil n/2\rceil}} \right)$. This means that the mixing time of $O$ is lower bounded like \begin{equation}
            t_\textup{mix}^{(O)}(\epsilon) \geq t_1 = \frac{1}{2} \log\left( \frac{1}{1-(1-\epsilon)^{1/\lceil n/2\rceil}} \right)\,.
        \end{equation}
    By considering a fixed $\epsilon$ and taking $n \to \infty$, we get that $t_1 \sim \frac{1}{2} \log\left(\frac{n}{2\log(1/(1-\epsilon))}\right)$, which further behaves as $\sim \frac{1}{2} \log\left(\frac{n}{2\epsilon}\right)$ for $\epsilon \to 0$. Conversely, when first fixing $n$ and taking $\epsilon \to 0$, we get that $t_1 \sim \frac{1}{2} \log\left(\frac{\lceil n/2\rceil}{\epsilon}\right)$, which further behaves as $\sim \frac{1}{2} \log\left(\frac{n}{2\epsilon}\right)$ for $n \to \infty$. These matching asymptotics show that $t_1 \sim \frac{1}{2} \log\left(\frac{n}{2\epsilon}\right)$ independently of the direction of approach, and so they also apply for $\epsilon = \epsilon(n)$ depending on $n$.
\end{proof}

\pagebreak
\section{Fermionic systems}\label{sec: fermions}

In this section, we consider specifically the quantum Gibbs samplers introduced in \cite{ding2025efficient} in the free fermionic settings as studied in \cite{smid2025polynomial,smid2025rapid, tong2025fast}.

\begin{prop}\label{prop: constant fermionic quadratic observable mixing time}
    For a free fermionic system with a bounded single particle Hamiltonian, quadratic fermionic observables $O$ mix in a constant time bounded by \begin{equation}
        t_\textup{mix}^{(O)}(\epsilon ) \leq \frac{1}{2\Delta_0} \log\left(\frac{2}{\epsilon}\right)\,.
    \end{equation}
\end{prop}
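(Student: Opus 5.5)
We will solve the Heisenberg-picture dynamics of quadratic observables exactly, using that the free-fermionic Gibbs sampler of \cite{ding2025efficient} has jump operators linear in the modes. Work in the eigenbasis of the single-particle Hamiltonian $h=\sum_k \varepsilon_k \phi_k^\dagger\phi_k$. Here the Lindbladian splits into independent mode-wise generators with jumps $\sqrt{\gamma_k^-}\,b_k$ and $\sqrt{\gamma_k^+}\,b_k^\dagger$, obeying KMS detailed balance. Let $\Delta_0$ be the smallest mode relaxation rate $\Delta_0=\min_k(\gamma_k^++\gamma_k^-)/2$ (up to the paper's normalisation). This is the fermionic analogue of the separable setting of Proposition \ref{prop: single site observables under separable Lindbladian}.

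\textbf{Step 1: linear modes decay exactly.} First compute $\mathcal{L}[b_k]=-\Delta_0^{(k)} b_k$, with the analogous relation for $b_k^\dagger$. This holds because the dissipator with linear jumps acts on a single mode operator by a damping term only. The fermionic sign structure is harmless here because the jumps are odd and the dissipator is the standard parity-respecting one.

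\textbf{Step 2: closed evolution of the covariance.} For a quadratic observable written as $O=\sum_{k,l} M_{kl}\, b_k^\dagger b_l + \sum_{k,l}(N_{kl} b_k b_l + \text{h.c.}) + c$, show that $\mathcal{L}$ maps quadratics to quadratics plus identity. Explicitly, $b_k^\dagger b_l\mapsto -(\Delta^{(k)}+\Delta^{(l)})\,b_k^\dagger b_l + \delta_{kl}\gamma_k^+ I$, and pairing terms decay at rate $\Delta^{(k)}+\Delta^{(l)}$. Hence $O(t)-\Tr(O\sigma)I=\sum_{kl} e^{-(\Delta^{(k)}+\Delta^{(l)})t}\,\widetilde M_{kl}\,(b_k^\dagger b_l-\langle b_k^\dagger b_l\rangle_\sigma)+\dots$. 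Every non-constant component therefore decays at rate at least $2\Delta_0$, which explains the factor $2\Delta_0$ in the statement.

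\textbf{Step 3: norm control (main obstacle).} We must bound the spectral norm of the decaying part by $2e^{-2\Delta_0 t}\|O\|$, uniformly in the number of modes. The rates are mode dependent, so the map $M\mapsto (e^{-(\Delta^{(k)}+\Delta^{(l)})t}M_{kl})$ is a Schur multiplier, and it is not directly a contraction-by-$e^{-2\Delta_0 t}$. The first route to try is to factor it. Write $e^{t\mathcal{L}}$ restricted to quadratics as the conjugation $b_k\mapsto e^{-\Delta^{(k)}t}b_k$, and split this as a uniform damping $e^{-\Delta_0 t}$ followed by a residual damping with rates $\Delta^{(k)}-\Delta_0\ge 0$. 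The residual map is itself the Heisenberg action of a quadratic Lindbladian with nonnegative rates, hence a unital channel and a spectral-norm contraction. For the uniform part, the centred quadratic operator equals $e^{-2\Delta_0 t}$ times a fixed centred quadratic $Q$, up to an identity shift. Combining $\|Q\|\le\|O\|+|\Tr(O\sigma)|\le 2\|O\|$ with the contraction property then gives $\|O(t)-\Tr(O\sigma)I\|\le 2e^{-2\Delta_0 t}\|O\|$. Setting this $\le\epsilon\|O\|$ yields $t_\textup{mix}^{(O)}(\epsilon)\le\frac{1}{2\Delta_0}\log(2/\epsilon)$ via the simplified definition \eqref{def: simplified mixing time of O}.

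The delicate point is to check that the factorisation commutes correctly with the identity shifts generated by $\gamma_k^+$, i.e.\ that centring with respect to $\sigma$ is preserved at each stage. If it is not, the fallback is to bound each centred term directly using $\|b_k^\dagger b_l - \langle\cdot\rangle\|$ together with the quadratic-form norm identity $\|\sum M_{kl}b_k^\dagger b_l\|=\sum_{\lambda>0}\lambda(M)$.
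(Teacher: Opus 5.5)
Your argument is correct in substance, but it takes a different route from the paper's.

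\textbf{How the paper argues.} The paper stays in the Majorana basis. It solves $\dot\Gamma^A=-K\Gamma^A-\Gamma^AK$ with $K=2\cosh(2\beta h)q(4h)^2$, which gives the congruence $\Gamma^A(t)=e^{-Kt}\Gamma e^{-Kt}$. It then combines the identity $\|\boldsymbol{\omega}^TX\boldsymbol{\omega}\|=\|X\|_{\Tr}$ for antisymmetric $X$ with $\|e^{-Kt}\Gamma e^{-Kt}\|_{\Tr}\le\|e^{-Kt}\|^2\|\Gamma\|_{\Tr}$. Its factor $2$ comes from $\|\rho_0-\sigma\|_{\Tr}\le 2$, after centring at $\Tr(O(t))/2^n$. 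Seen this way, the Schur-multiplier obstacle you anticipate does not arise. The multiplier $e^{-(\Delta^{(k)}+\Delta^{(l)})t}$ is just a two-sided diagonal congruence, and the trace-norm identity turns it directly into an operator-norm contraction by $e^{-2\Delta_0 t}$.

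\textbf{How your route differs.} Your Step 3 avoids any norm formula for quadratic forms. On $\sigma$-centred quadratics you write the generator as $-2\Delta_0$ plus another Lindbladian, and you use that unital CP maps contract the spectral norm. Your factor $2$ comes from $\|O-\Tr(O\sigma)I\|\le 2\|O\|$. The paper's route gives the explicit $O(t)$ and the exact norm of its decaying part. Yours is more structural: it applies unchanged to any invariant operator subspace on which the generator is a uniform shift plus a genuine Lindbladian with the same steady state.

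\textbf{Step 1 is false as stated.} With the standard dissipator, jumps on other modes do not drop out of odd operators. For $l\neq k$ one has $b_l^\dagger b_kb_l-\tfrac12\{n_l,b_k\}=-2n_lb_k$. Hence $\mathcal{L}[b_k]$ carries the extra term $-2\sum_{l\neq k}(\gamma_l^-n_l+\gamma_l^+(1-n_l))b_k$, and linear modes do not decay exactly. Nothing downstream needs this claim. For even operators such as $b_k^\dagger b_l$ and $b_kb_l$, the other modes' contributions do cancel, so your Step 2 formulas are right when computed directly. You should, however, drop the ``conjugation $b_k\mapsto e^{-\Delta^{(k)}t}b_k$'' picture and phrase Step 3 purely in terms of the Step 2 multiplier.

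\textbf{The delicate point resolves positively.} Take residual rates $\gamma_k'^{\pm}=(1-\Delta_0/\Delta^{(k)})\gamma_k^{\pm}$. The residual generator $\mathcal{L}'$ then has the same occupations $\langle n_k\rangle_\sigma=\gamma_k^+/(\gamma_k^++\gamma_k^-)$. It maps $\sigma$-centred quadratics to themselves with multiplier $-(\Delta^{(k)}+\Delta^{(l)}-2\Delta_0)$ and no identity shift. Therefore $O(t)-\Tr(O\sigma)I=e^{-2\Delta_0t}\,e^{t\mathcal{L}'}[O-\Tr(O\sigma)I]$ holds exactly, and your bound $2e^{-2\Delta_0 t}\|O\|$ follows. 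Your fallback formula is not needed, and it is also incorrect as written: it computes $\lambda_{\max}$, whereas the norm is $\max\{\sum_{\lambda>0}\lambda,\,-\sum_{\lambda<0}\lambda\}$.
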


\begin{proof}
    For convenience, write the Hamiltonian using Majorana fermions as $H_0 = \sum_{i,j} \omega_i h_{ij} \omega_j$, where $h$ is a hermitian and anti-symmetric $2n\times 2n$ matrix, and the Majorana fermions obey $\{\omega_i,\omega_j\} = 2\delta_{ij}$ together with $\omega_i^\dagger = \omega_i$. Bounded single particle Hamiltonian here means that $\|h\|=\mathcal{O}(1)$. Here we take the set of jump operators to be the single-site Majorana fermions, $\mathcal{A} = \{\omega_i\}_{i=1}^{2n}$.

    Now let's consider the time evolution of a quadratic fermionic observable $O = \sum_{i,j} \omega_i \Gamma_{ij} \omega_j$ in the Heisenberg picture, $O(t) = e^{t\mathcal{L}}[O]$, where $\Gamma$ is also a hermitian and anti-symmetric $2n\times 2n$ matrix. Since we're evolving a quadratic operator by a quadratic Lindbladian, we may expect it to remain quadratic throughout the evolution, and hence can write \begin{equation}
        O(t) = \sum_{i,j} \omega_i \Gamma_{ij}(t) \omega_j\,,
    \end{equation} together with the initial condition $\Gamma(0) = \Gamma$. The dynamics is then governed by \begin{align}
        \frac{\dd O(t)}{\dd t} = \sum_{i,j} \omega_i \dot \Gamma_{ij}(t) \omega_j = \mathcal{L}[O(t)] &= \sum_k \left( L_k^\dagger O(t) L_k - \frac{1}{2}\{L_k^\dagger L_k,O(t)\}\right)\\
        &=  \sum_{k,i,j}\left( L_k^\dagger \omega_i  \Gamma_{ij}(t) \omega_j L_k - \frac{1}{2}\{L_k^\dagger L_k,\omega_i  \Gamma_{ij}(t) \omega_j\}\right)\,.
    \end{align} Separating $\Gamma(t)$ into its symmetric and anti-symmetric parts, $\Gamma(t) = \Gamma^A(t) + \Gamma^S(t)$, and using \begin{equation} L_k = \sum_l [q(4h)e^{\beta h}]_{kl} \cdot \omega_l\,,\end{equation} we find that \begin{align} \hspace{-1cm}
        \sum_{i,j} \dot \Gamma^A_{ij}(t) \cdot \omega_i \omega_j + \Tr(\dot \Gamma^S(t)) \cdot I &= \sum_{i,j}\Big[ -4 e^{2\beta h} q(4h)^2 \Gamma^A(t) - 2 \Gamma^A(t) \sinh(2\beta h) q(4h)^2 + 2 q(4h)^2 \sinh(2\beta h) \Gamma^A(t) \hspace{-1cm}  \\ 
        &\qquad\qquad - 2 \Gamma^S(t) \sinh(2\beta h) q(4h)^2 + 2 q(4h)^2 \sinh(2\beta h) \Gamma^S(t)  \Big]_{ij} \cdot \omega_i \omega_j\,.
    \end{align} By splitting the RHS into its symmetric and anti-symmetric parts, we find that \begin{align}\hspace{-1cm}
        \sum_{i,j} \dot \Gamma^A_{ij}(t) \cdot \omega_i \omega_j + \Tr(\dot \Gamma^S(t)) \cdot I = \sum_{i,j}\left[ -2\cosh(2\beta h) q(4h)^2 \Gamma^A(t)-2\Gamma^A(t) \cosh(2\beta h) q(4h)^2 \right]_{ij} \cdot \omega_i \omega_j\hspace{1cm}\\  \hspace{1cm} +\Tr \Big( 2\sinh(2\beta h) q(4h)^2 \Gamma^S(t) - 2\Gamma^S(t)\sinh(2\beta h) q(4h)^2 - 2e^{2\beta h} q(4h)^2 \Gamma^A(t) +  2 \Gamma^A(t)e^{-2\beta h} q(4h)^2 \Big) \cdot I\,,\hspace{-1cm}
    \end{align} which can be separated into \begin{align}\hspace{-1cm}
        \dot\Gamma^A(t) &= -2\cosh(2\beta h) q(4h)^2 \Gamma^A(t)-2\Gamma^A(t) \cosh(2\beta h) q(4h)^2\,,\\
        \Tr(\dot \Gamma^S(t)) &=\Tr \Big( 2\sinh(2\beta h) q(4h)^2 \Gamma^S(t) - 2\Gamma^S(t)\sinh(2\beta h) q(4h)^2 - 2e^{2\beta h} q(4h)^2 \Gamma^A(t) +  2 \Gamma^A(t)e^{-2\beta h} q(4h)^2\Big)\,.\hspace{-1cm}
    \end{align} Using the initial condition $\Gamma(0) = \Gamma$, we can solve the equation for $\Gamma^A(t)$ with \begin{align}
        \Gamma^A(t) = e^{-2\cosh(2\beta h) q(4h)^2 \cdot t} \cdot \Gamma \cdot e^{-2\cosh(2\beta h) q(4h)^2 \cdot t} \,,
    \end{align} which leaves us with \begin{align}
        &\hspace{-1.2cm}\Tr(\dot \Gamma^S(t)) =\Tr \Big( 2\sinh(2\beta h) q(4h)^2 \Gamma^S(t) - 2\Gamma^S(t)\sinh(2\beta h) q(4h)^2\\ &\hspace{-1.2cm}- 2e^{2\beta h} q(4h)^2 e^{-2\cosh(2\beta h) q(4h)^2 \cdot t} \cdot \Gamma \cdot e^{-2\cosh(2\beta h) q(4h)^2 \cdot t} +  2 e^{-2\cosh(2\beta h) q(4h)^2 \cdot t} \cdot \Gamma \cdot e^{-2\cosh(2\beta h) q(4h)^2 \cdot t}e^{-2\beta h} q(4h)^2\Big)\,.\hspace{-1cm}
    \end{align} The traces will be certainly equal when the matrices themselves are equal: \begin{align}
        \hspace{-1cm}\dot \Gamma^S(t) =2&\sinh(2\beta h) q(4h)^2 \Gamma^S(t) - 2\Gamma^S(t)\sinh(2\beta h) q(4h)^2- 2e^{2\beta h} q(4h)^2 e^{-2\cosh(2\beta h) q(4h)^2 \cdot t} \cdot \Gamma \cdot e^{-2\cosh(2\beta h) q(4h)^2 \cdot t}\hspace{-1cm}\\ &+  2 e^{-2\cosh(2\beta h) q(4h)^2 \cdot t} \cdot \Gamma \cdot e^{-2\cosh(2\beta h) q(4h)^2 \cdot t}e^{-2\beta h} q(4h)^2\,,
    \end{align} which can be formally solved by \begin{align}
        \hspace{-1cm}\Gamma^S(t) = e^{2\sinh(2\beta h) q(4h)^2 \cdot t} \cdot \int_0^t&\Big[ - 2e^{2\beta h} q(4h)^2 e^{-2\cosh(2\beta h) q(4h)^2 \cdot x} \cdot \Gamma \cdot e^{-2\cosh(2\beta h) q(4h)^2 \cdot x} \\ &+2 e^{-2\cosh(2\beta h) q(4h)^2 \cdot x} \cdot \Gamma \cdot e^{-2\cosh(2\beta h) q(4h)^2 \cdot x}e^{-2\beta h} q(4h)^2 \Big]\dd x \cdot e^{-2\sinh(2\beta h) q(4h)^2 \cdot t}\,.\hspace{-1cm}
    \end{align} Since we only need the trace $\Tr(\Gamma^S(t))$, we can greatly simplify this integration by cycling $\Gamma$ to the end: \begin{align}
        \Tr( \Gamma^S(t)) &= -4 \Tr\left( \sinh(2\beta h) q(4h)^2 \int_0^t e^{-4\cosh(2\beta h) q(4h)^2 \cdot x}\ \dd x \cdot \Gamma \right)\\
        &= \Tr\left( \tanh(2\beta h) (e^{-4\cosh(2\beta h) q(4h)^2 \cdot t}-1) \cdot \Gamma\right)\,,
    \end{align} which finally gives us the time-evolved form of $O$ as \begin{equation}
        O(t) = \boldsymbol{\omega}^T \cdot e^{-2\cosh(2\beta h) q(4h)^2 \cdot t} \cdot \Gamma \cdot e^{-2\cosh(2\beta h) q(4h)^2 \cdot t} \cdot \boldsymbol{\omega} + \Tr\left( \tanh(2\beta h) (e^{-4\cosh(2\beta h) q(4h)^2 \cdot t}-1) \cdot \Gamma\right) \cdot I\,,\label{eqn: time-evolved observable}
    \end{equation} where we've utilised bold matrix-vector notation for convenience.

    Finally, for any observable $O$, we can get the following bound on the difference of its expectation values: \begin{align}
        \left| \Tr(O \cdot (\rho(t) - \sigma_\beta) )\right| = \left| \Tr (O
        (t) \cdot (\rho_0 - \sigma_\beta)) \right| &= \left| \Tr \left( \left(O
        (t)-\Tr(O(t)) \cdot \frac{I}{2^n} \right)\cdot (\rho_0 - \sigma_\beta)\right) \right| \\ 
        &\leq \left\| O
        (t)-\Tr(O(t)) \cdot \frac{I}{2^n} \right\| \cdot \|\rho_0 - \sigma_\beta\|_{\Tr} \\
        &\leq 2 \left\| O
        (t)-\Tr(O(t)) \cdot \frac{I}{2^n} \right\|\,,
    \end{align} and hence using the form \eqref{eqn: time-evolved observable}, we find that \begin{align}
        \left| \Tr(O \cdot (\rho(t) - \sigma_\beta) )\right| &\leq 2 \left\| \boldsymbol{\omega}^T \cdot e^{-2\cosh(2\beta h) q(4h)^2 \cdot t} \cdot \Gamma \cdot e^{-2\cosh(2\beta h) q(4h)^2 \cdot t} \cdot \boldsymbol{\omega} \right\|\\
        &= 2 \left\| e^{-2\cosh(2\beta h) q(4h)^2 \cdot t} \cdot \Gamma \cdot e^{-2\cosh(2\beta h) q(4h)^2 \cdot t} \right\|_{\Tr}\\
        &\leq 2 \left\|e^{-2\cdot \cosh(2\beta h) q(4h)^2 \cdot t}\right\|^2 \cdot \|\Gamma \|_{\Tr}\\
        &= 2 e^{-4\cdot \min_i q(4\epsilon_i)^2\cosh(2\beta \epsilon_i) \cdot t} \cdot \| O \|\,,
    \end{align} where $\epsilon_i \in \operatorname{spec}(h)$. Setting this smaller to $\epsilon \| O\|$ will imply the bound on the observable mixing time as per Definition \ref{def:mixing time of observables}. This will be satisfied whenever \begin{equation}
        t \geq \frac{1}{4\cdot \min_i q(4\epsilon_i)^2\cosh(2\beta \epsilon_i)} \cdot \log\left(\frac{2}{\epsilon}\right) \eqqcolon \frac{1}{2\Delta_0}\cdot \log\left(\frac{2}{\epsilon}\right)\,,
    \end{equation} yielding the result of the proposition.
\end{proof}

\begin{remark}
    Adding a symmetric part to the coefficient matrix $h$ amounts to adding a scalar to the observable, $O \mapsto O'=O+\Tr(h)\cdot I$. Then we have \begin{equation}
        \left\| O'
        (t)-\Tr(O'(t)) \cdot \frac{I}{2^n} \right\| = \left\| O
        (t)-\Tr(O(t)) \cdot \frac{I}{2^n} \right\| \leq e^{-2\Delta_0 t } \|O\| \leq e^{-2\Delta_0 t } (\|O\|+|\Tr(h)|) = e^{-2\Delta_0 t }\|O'\|\,,
    \end{equation} where the last equality follows from the spectrum of $O$ being even. Hence adding a symmetric part to $h$ can only decrease the mixing time, and so the result of Proposition \ref{prop: constant fermionic quadratic observable mixing time} holds for arbitrary hermitian $h$ with $\|h\|=\mathcal{O}(1)$. 
\end{remark}


\section{Bosonic systems}\label{sec: bosons}

In this section, we consider specifically the quantum Gibbs samplers introduced in \cite{ding2025efficient} in the free bosonic settings as studied in \cite{smid2025rapid}. For more general, non-Gaussian, settings of bosonic quantum Gibbs samplers, as well as details about their practical implementations, we refer the reader to \cite{becker2026computing, becker2026quantum, becker2026simulating}.

When we consider infinite-dimensional spaces, such as bosonic Fock spaces, we immediately run into complications with the Definition \ref{def:mixing time of observables}. Even simple operators appearing in these spaces are generally unbounded, and so this notion of mixing time is not well defined, as the spectral norms appearing therein are infinite. Instead, we need to introduce the following regularisation:

\begin{definition}\label{def:mixing time of bosonic observables}
    The mixing time of a bosonic observable $O$ under the evolution generated by $\mathcal{L}$ for an arbitrary initial state is 
    \begin{align}
        t_{\textup{mix}}^{(O)}(\epsilon) &= \inf\left\{ t\geq 0 \left|\lim_{M\to \infty}
    \frac{\|\Pi_M(e^{t\mathcal{L}}[O]-\Tr(O\cdot \sigma_\beta))\Pi_M\|}{\|\Pi_M O \Pi_M\|}\leq \epsilon\right.\right\}\,,
    \end{align} where $\Pi_M$ is the projector onto Fock states with at most $M$ total bosons.
\end{definition}

Note that this definition is equivalent to the original one for any bounded operator in a finite-dimensional space. The motivation here is that we consider the mixing time on any truncated finite-dimensional space, and define the mixing time of the observable as the limit of the dimension going to infinity\,---\,if it exists. Further, note that the total bosonic number is invariant under a unitary Bogoliubov transformation, $a_i \mapsto b_i = \sum_j U_{ij}\cdot a_j$, as $N^{(b)} = \sum_i b_i^\dagger b_i = \sum_{i,j,k}a_j^\dagger \overline{U}_{ij} U_{ik} a_k = \sum_{j,k} a_j^\dagger \delta_{jk}a_k = N^{(a)}$, which will be important later in the following proposition:

\begin{prop}\label{prop: constant quadratic bosonic observable mixing time}
    For a free bosonic system $H_0 = \sum_{i,j}a_i^\dagger h_{ij} a_j$ with a bounded single particle Hamiltonian, quadratic bosonic observables $O = \sum_{i,j}a_i^\dagger \Gamma_{ij}a_j + c\cdot I$, where $\Gamma$ is a real symmetric $n\times n$ matrix, mix in a constant time bounded by \begin{equation}
        t_\textup{mix}^{(O)}(\epsilon ) \leq \frac{1}{2\Delta_0} \log\left(\frac{1}{\epsilon}\right)\,.
    \end{equation}
\end{prop}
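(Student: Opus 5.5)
We follow the same route as in the fermionic Proposition \ref{prop: constant fermionic quadratic observable mixing time}: compute the Heisenberg evolution exactly, then estimate the regularised norms of Definition \ref{def:mixing time of bosonic observables}.

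\textbf{Step 1 (ansatz and equations of motion).} The jump operators are the single-mode annihilation and creation operators, $L_k = \sum_l [q(h)e^{-\beta h/2}]_{kl}\, a_l$ and their adjoint counterparts with $e^{+\beta h/2}$. We make the ansatz that the evolved observable stays quadratic,
\begin{equation}
O(t) = \sum_{i,j} a_i^\dagger \Gamma_{ij}(t) a_j + c(t)\cdot I, \qquad \Gamma(0)=\Gamma,\quad c(0)=c .
\end{equation}
We insert this into $\mathcal{L}[O(t)]$ and use the canonical commutation relations to reduce every term to normal-ordered quadratic form plus a scalar. We expect the result to be a pair of equations:
\begin{equation}
\dot\Gamma(t) = -K\Gamma(t) - \Gamma(t)K, \qquad \dot c(t) = \Tr(M\,\Gamma(t)),
\end{equation}
where $K = \tfrac12 q(h)^2(e^{\beta h}-e^{-\beta h})$ up to normalisation, i.e.\ $K \propto q(h)^2\sinh(\beta h)$, and $M$ is some function of $h$. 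Any commutator terms with $H_0$ appear only as a conjugation by a unitary, so they do not affect norms.

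\textbf{Step 2 (solving the equations).} The equations give $\Gamma(t) = e^{-Kt}\Gamma e^{-Kt}$. Integrating the scalar equation gives $c(t)$ explicitly. We do not need to track $c(t)$ in detail, since the constant must converge to $\Tr(O\sigma_\beta)$. It then follows that
\begin{equation}
O(t) - \Tr(O\sigma_\beta)\cdot I = \sum_{i,j} a_i^\dagger \Gamma_{ij}(t) a_j + (c(t)-c_\infty)\cdot I .
\end{equation}
Here $c(t)-c_\infty = -\Tr(\tilde M e^{-2Kt}\Gamma)$ for some bounded $\tilde M$. We define $\Delta_0 = \lambda_{\min}(K)$, which is a positive constant because the Bose factors make $\sinh(\beta\epsilon_i)>0$ for the positive single-particle energies required for the Gibbs state to exist.

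\textbf{Step 3 (truncated norms).} We diagonalise $\Gamma(t)$ by an orthogonal (and hence Bogoliubov) transformation. Since $\Gamma$ is real symmetric, $\Gamma(t)$ is as well. The quadratic part then becomes $\sum_k \mu_k(t)\, b_k^\dagger b_k$, and because the total number operator is invariant, $\Pi_M$ commutes with this transformation. On the $\le M$ boson sector, the norm of the quadratic part is $M\max_k|\mu_k(t)| = M\|\Gamma(t)\|$, while the scalar part contributes only $O(1)$. Similarly, $\|\Pi_M O\Pi_M\| = M\|\Gamma\| + O(1)$. In the ratio limit $M\to\infty$ the scalar terms drop out, leaving
\begin{equation}
\lim_{M\to\infty}\frac{\|\Pi_M(\cdots)\Pi_M\|}{\|\Pi_M O\Pi_M\|} = \frac{\|e^{-Kt}\Gamma e^{-Kt}\|}{\|\Gamma\|} \le e^{-2\Delta_0 t}.
\end{equation}
Setting this $\le\epsilon$ gives the claimed bound. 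This step also explains why the constant $c$ and the scalar drift are harmless, and why there is no prefactor of $2$, unlike in the fermionic case.

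\textbf{Main obstacle.} The delicate part is Step 1: the normal-ordering bookkeeping for bosons, in particular getting the sign of the damping matrix $K$ right. With bosons, the difference between the emission and absorption rates gives $\sinh$ rather than the $\cosh$ obtained for fermions, and positivity of $K$ relies on $h>0$. A secondary point is justifying the ansatz on an unbounded domain. We would handle this by checking that the quadratic ansatz solves the equations on the dense domain of finite-particle states, and by relying on the $M\to\infty$ limit taken in Definition \ref{def:mixing time of bosonic observables}.
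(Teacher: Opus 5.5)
Your route is the paper's route. You use the quadratic ansatz, solve $\dot\Gamma=-K\Gamma-\Gamma K$ by $\Gamma(t)=e^{-Kt}\Gamma e^{-Kt}$, and diagonalise with an orthogonal (hence number-conserving) Bogoliubov transformation to get $\|\Pi_M(\cdot)\Pi_M\|=M\|\Gamma(t)\|+\mathcal{O}(1)$ and $\|\Pi_M O\Pi_M\|=M\|\Gamma\|+\mathcal{O}(1)$, then bound $\|\Gamma(t)\|\le e^{-2\lambda_{\min}(K)t}\|\Gamma\|$. Your observation that the scalar terms, and hence the precise value of $c_\infty$, drop out of the $M\to\infty$ ratio is exactly what the paper uses.

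The one thing that is actually wrong is the input to Step 1, precisely where you anticipated trouble. Take your jumps literally: annihilation jumps $q(h)e^{-\beta h/2}\mathbf a$ and creation jumps carrying $e^{+\beta h/2}$. The normal-ordering computation (the quartic terms cancel) gives $\dot\Gamma(t)=-\tfrac12(G-F)\Gamma(t)-\tfrac12\Gamma(t)(G-F)$ and $\dot c(t)=\Tr(F\Gamma(t))$. Here $G=q(h)^2e^{-\beta h}$ is the annihilation rate matrix and $F=q(h)^2e^{\beta h}$ the creation rate matrix, so $K=-q(h)^2\sinh(\beta h)$. The quadratic part then grows and nothing converges. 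Even with the sign fixed, the rate ratio $e^{2\beta h}$ would encode inverse temperature $2\beta$.

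In the sampler the paper uses, the jumps are the quadratures $x_i,p_i$ with filter $\hat f(\nu)=q(\nu)e^{-\beta\nu/4}$. Since $a$ has Bohr frequency $-h$, the annihilation part is $\hat f(-h)\mathbf a=q(h)e^{+\beta h/4}\mathbf a$ and the creation part is $\hat f(h)\mathbf a^\dagger=q(h)e^{-\beta h/4}\mathbf a^\dagger$. Summing the $x$ and $p$ dissipators cancels the $a$--$a^\dagger$ cross terms. So this is your annihilation/creation picture with $G=2q(h)^2e^{\beta h/2}$ and $F=2q(h)^2e^{-\beta h/2}$. It gives $K=2q(h)^2\sinh(\beta h/2)$, $c_\infty=c+\Tr((e^{\beta h}-1)^{-1}\Gamma)$ and $\Delta_0=\lambda_{\min}(K)=2\min_i q(\epsilon_i)^2\sinh(\beta\epsilon_i/2)$. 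After this correction the rest of your argument goes through verbatim.

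Lastly, positivity of $h$ only gives $\Delta_0>0$ for each $n$. A system-size-independent time needs $\Delta_0$ bounded below uniformly in $n$. The paper gets this by reading ``bounded single particle Hamiltonian'' as $\|h\|,\|h^{-1}\|=\mathcal{O}(1)$. The lower bound on the $\epsilon_i$ keeps $\sinh(\beta\epsilon_i/2)$ away from zero, and the upper bound keeps the Gaussian filter $q(\epsilon_i)$ away from zero.
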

\begin{proof}
    Here we consider the Hamiltonian $H_0 = \sum_{i,j}a_i^\dagger h_{ij} a_j$, where $h$ is a real symmetric $n\times n$ matrix, and canonical bosonic operators obeying the CCR algebra $[a_i,a_j^\dagger] = \delta_{ij}$. By a bounded single particle Hamiltonian we now mean $\|h\| = \mathcal{O}(1)$ and $\|h^{-1}\| = \mathcal{O}(1)$. Following \cite{smid2025rapid}, we take the set of jump operators to be $\mathcal{A} = \{x_i,p_i\}_{i=1}^n$, where $x_i=a_i+a_i^\dagger$ and $p_i = -i(a_i-a_i^\dagger)$.

    First, we need to calculate the time evolution $O(t)$ of the quadratic bosonic observable $O$ in the Heisenberg picture, similarly to Proposition \ref{prop: constant fermionic quadratic observable mixing time}. We can again expect $O(t)$ to remain quadratic throughout the evolution, and hence can write \begin{equation}
        O(t) = \sum_{i,j}a_i^\dagger \Gamma_{ij}(t) a_j + c(t) \cdot I\,,
    \end{equation} together with the initial conditions $\Gamma(0) = \Gamma$ and $c(0) = c$. The dynamics is then governed by \begin{align}
        \frac{\dd O(t)}{\dd t} = \sum_{i,j}a_i^\dagger \dot\Gamma_{ij}(t) a_j + \dot c(t) \cdot I = \mathcal{L}[O(t)] &= \sum_\mu\left( L_\mu^\dagger O(t) L_\mu - \frac{1}{2}\{L_\mu^\dagger L_\mu,O(t)\}\right)\\
        &= \sum_\mu \sum_{i,j} \left( L_\mu^\dagger a_i^\dagger \Gamma_{ij}(t) a_j L_\mu - \frac{1}{2}\{L_\mu^\dagger L_\mu,a_i^\dagger \Gamma_{ij}(t) a_j\}\right)\,,
    \end{align} with \begin{equation}
        L_\mu = \begin{cases}
            \hat f(-h) \cdot \mathbf{a} + \hat f(h) \cdot \mathbf{a}^\dagger &\text{ for } 1\leq \mu \leq n\,,\\
            -i\hat f(-h) \cdot \mathbf{a} + i\hat f(h) \cdot \mathbf{a}^\dagger &\text{ for } n+1\leq \mu \leq 2n\,,\\
        \end{cases}
    \end{equation} where we've again utilised bold matrix-vector notation for convenience.
    In the bosonic setting, we can expect $\Gamma(t)$ to remain symmetric throughout the evolution. After some simplifications, we will find that \begin{align}
        \sum_{i,j}a_i^\dagger \dot\Gamma_{ij}(t) a_j + \dot c(t) \cdot I  &= \sum_{i,j} \left(-\Gamma(t) \hat f(-h)^2 -\hat f(-h)^2 \Gamma(t) + \hat f(h)^2 \Gamma(t) + \Gamma(t) \hat f(h)^2\right)_{ij}a_i^\dagger a_j\\ &\qquad + 2\cdot \Tr(\Gamma(t) \cdot \hat f(h)^2 ) \cdot I\,.
    \end{align} Using that $\hat f(\nu) = q(\nu) \cdot e^{-\beta \nu/4}$, where $q(\nu)$ is even, we can split this equation into \begin{align}
        \dot \Gamma(t) &= -2 \cdot \Gamma(t) \cdot q(h)^2 \sinh(\beta h/2) - 2 \cdot q(h)^2 \sinh(\beta h/2) \cdot \Gamma(t)\,,\\
        \dot c(t) &= 2 \cdot \Tr(q(h)^2 e^{-\beta h/2} \cdot \Gamma(t))\,. 
    \end{align} Together with the initial condition $\Gamma(0) = \Gamma$, we can solve the first equation by \begin{align}
        \Gamma(t) = e^{-2\cdot q(h)^2 \sinh(\beta h/2) \cdot t} \cdot \Gamma \cdot e^{-2\cdot q(h)^2 \sinh(\beta h/2) \cdot t}\,,
    \end{align} which then leads to the solution of the second equation as \begin{align}
        c(t) &= c+ 2 \cdot \Tr\left(q(h)^2 e^{-\beta h/2} \cdot \int_0^te^{-4\cdot q(h)^2 \sinh(\beta h/2) \cdot x}\dd x \cdot \Gamma\right)\\
        &= c+ \Tr\left(\frac{1}{e^{\beta h}-1} \cdot \left(1-e^{-4\cdot q(h)^2 \sinh(\beta h/2) \cdot t}\right) \cdot \Gamma\right)\,.
    \end{align} This finally gives us the time-evolved form of $O$ as \begin{equation}\hspace{-1cm}
        O(t) = \mathbf{a}^\dagger \cdot e^{-2\cdot q(h)^2 \sinh(\beta h/2) \cdot t} \cdot \Gamma \cdot e^{-2\cdot q(h)^2 \sinh(\beta h/2) \cdot t} \cdot \mathbf{a} + \Tr\left(\frac{1}{e^{\beta h}-1} \cdot \left(1-e^{-4\cdot q(h)^2 \sinh(\beta h/2) \cdot t}\right) \cdot \Gamma\right) \cdot I + c\cdot I\,.\hspace{-1cm}
    \end{equation}

    Now, we wish to evaluate the mixing time of $O$ as per Definition \ref{def:mixing time of bosonic observables}. In order to do that, we need to consider the effect of truncating the Fock space to states with at most $M$ bosons. Observe that the spectrum of $\Pi_M \cdot (O(t) - \Tr(O\cdot \sigma_\beta) \cdot I)\cdot \Pi_M$ is then \begin{equation}\hspace{-0.7cm}
        \operatorname{spec}(\Pi_M \cdot (O(t) - \Tr(O\cdot \sigma_\beta) \cdot I)\cdot \Pi_M) = \left\{ \left.\sum_i \lambda_i \cdot n_i - \Tr\left(\frac{1}{e^{\beta h}-1} \cdot e^{-4\cdot q(h)^2 \sinh(\beta h/2) \cdot t} \cdot \Gamma\right)\right| n_i \in \mathbb{N}_0, \sum_i n_i \leq M\right\}\,,\hspace{-1cm}
    \end{equation} where $\lambda_i \in \operatorname{spec}\left(e^{-2\cdot q(h)^2 \sinh(\beta h/2) \cdot t} \cdot \Gamma \cdot e^{-2\cdot q(h)^2 \sinh(\beta h/2) \cdot t}\right)$. For this result, it was important that the total number of bosons is invariant under a unitary Bogoliubov transformation. The spectral norm for sufficiently large $M$ is then simply \begin{align}
        \left\|\Pi_M \cdot (O(t) - \Tr(O\cdot \sigma_\beta) \cdot I)\cdot \Pi_M\right\| &= M \cdot \left\|e^{-2\cdot q(h)^2 \sinh(\beta h/2) \cdot t} \cdot \Gamma \cdot e^{-2\cdot q(h)^2 \sinh(\beta h/2) \cdot t} \right\|\\ &\qquad \mp \Tr\left(\frac{1}{e^{\beta h}-1} \cdot e^{-4\cdot q(h)^2 \sinh(\beta h/2) \cdot t} \cdot \Gamma\right)\,,
    \end{align} where the sign depends on the sign of the $\lambda_i$ corresponding to the spectral norm $\|e^{-2\cdot q(h)^2 \sinh(\beta h/2) \cdot t} \cdot \Gamma \cdot e^{-2\cdot q(h)^2 \sinh(\beta h/2) \cdot t}\|$. Similarly, we have that $\|\Pi_M O \Pi_M\| = M \cdot \|\Gamma\| \pm c$ for large enough $M$. Hence we finally get that \begin{align}
        \lim_{M\to \infty}\frac{\left\|\Pi_M \cdot (O(t) - \Tr(O\cdot \sigma_\beta) \cdot I)\cdot \Pi_M\right\|}{\|\Pi_M O \Pi_M\|} &= \frac{\left\|e^{-2\cdot q(h)^2 \sinh(\beta h/2) \cdot t} \cdot \Gamma \cdot e^{-2\cdot q(h)^2 \sinh(\beta h/2) \cdot t} \right\|}{\|\Gamma\|}\\
        &\leq e^{-4 \cdot\min_i q(\epsilon_i)^2 \sinh(\beta \epsilon_i/2) \cdot t}\\
        &\overset{\text{set}}{\leq} \epsilon\,,
    \end{align} where $\epsilon_i \in \operatorname{spec}(h)$. This is smaller than $\epsilon$ whenever \begin{equation}
        t \geq \frac{1}{4 \cdot\min_i q(\epsilon_i)^2 \sinh(\beta \epsilon_i/2)} \cdot\log\left(\frac{1}{\epsilon}\right) \eqqcolon \frac{1}{2\Delta_0} \cdot\log\left(\frac{1}{\epsilon}\right) \,,
    \end{equation} yielding the result of the proposition.
\end{proof}

\pagebreak
\section{Numerical simulations}\label{sec: numerics}

In order to complement our theoretical results as well as probe the differences in mixing behaviour in practice, we perform numerical simulations of quantum Gibbs samplers for small systems with up to $n=12$ sites (or, rather, encoded in up to $12$ qubits), comparing the exact mixing times for specific initial states with the energy-specific mixing time. We will show that the qualitatively different scaling significantly lowers the necessary evolution time compared to global state mixing even for small system sizes, achieving up to $2\times$ speed-up for only $n=12$ qubits in the case of a 1D transverse-field Ising model (Figure \ref{fig:TFIM_mixing}). The code for these simulations is available at \cite{Smid_GitHub_GibbsSampling}.

In the case of spin systems, we consider the transverse-field Ising model, described by the Hamiltonian \begin{equation}
    H_\text{Ising} = -J \sum_{\langle i,j\rangle} Z_i Z_j - h\sum_{i=1}^n X_i\,,\label{eqn: Hamiltonian TFIM}
\end{equation} where $\langle i,j\rangle$ represents nearest neighbours on a given lattice. For the Gibbs sampler, we choose single-site Pauli jumps together with the Gaussian filter function. Corollary \ref{cor: constant mixing weakly-interacting Gibbs} is hence applicable in the weakly-interacting regime $|J/h| \ll 1$. On Figure \ref{fig:TFIM_mixing}, we consider the TFIM on a 1D lattice at interaction strength $J/h = 0.5$ (note that this model undergoes a phase transition at $J/h = 1$), comparing the mixing times $t_\textup{mix}^{(\rho_0 = I/2^n)}(0.01)$ and $t_\textup{mix}^{(\rho_0 = |0\rangle\langle 0|)}(0.01)$ to the corresponding energy mixing times  $t_\textup{mix}^{(H,\rho_0 = I/2^n)}(0.01)$ and $t_\textup{mix}^{(H,\rho_0 = |0\rangle\langle 0|)}(0.01)$ as well as the energy mixing time for arbitrary initial state  $t_\textup{mix}^{(H)}(0.01)$. In both cases of the state mixing times, we see an increasing logarithmic scaling, while the energy-specific mixing time in all cases remains constant (or even decreases in the case $\rho_0 = |0\rangle\langle0|$) with increasing system size, confirming our predictions of Corollary \ref{cor: constant mixing weakly-interacting Gibbs}.

\begin{figure}[b]
    \centering
    \hspace{-1cm}\includegraphics[width=1.05\linewidth]{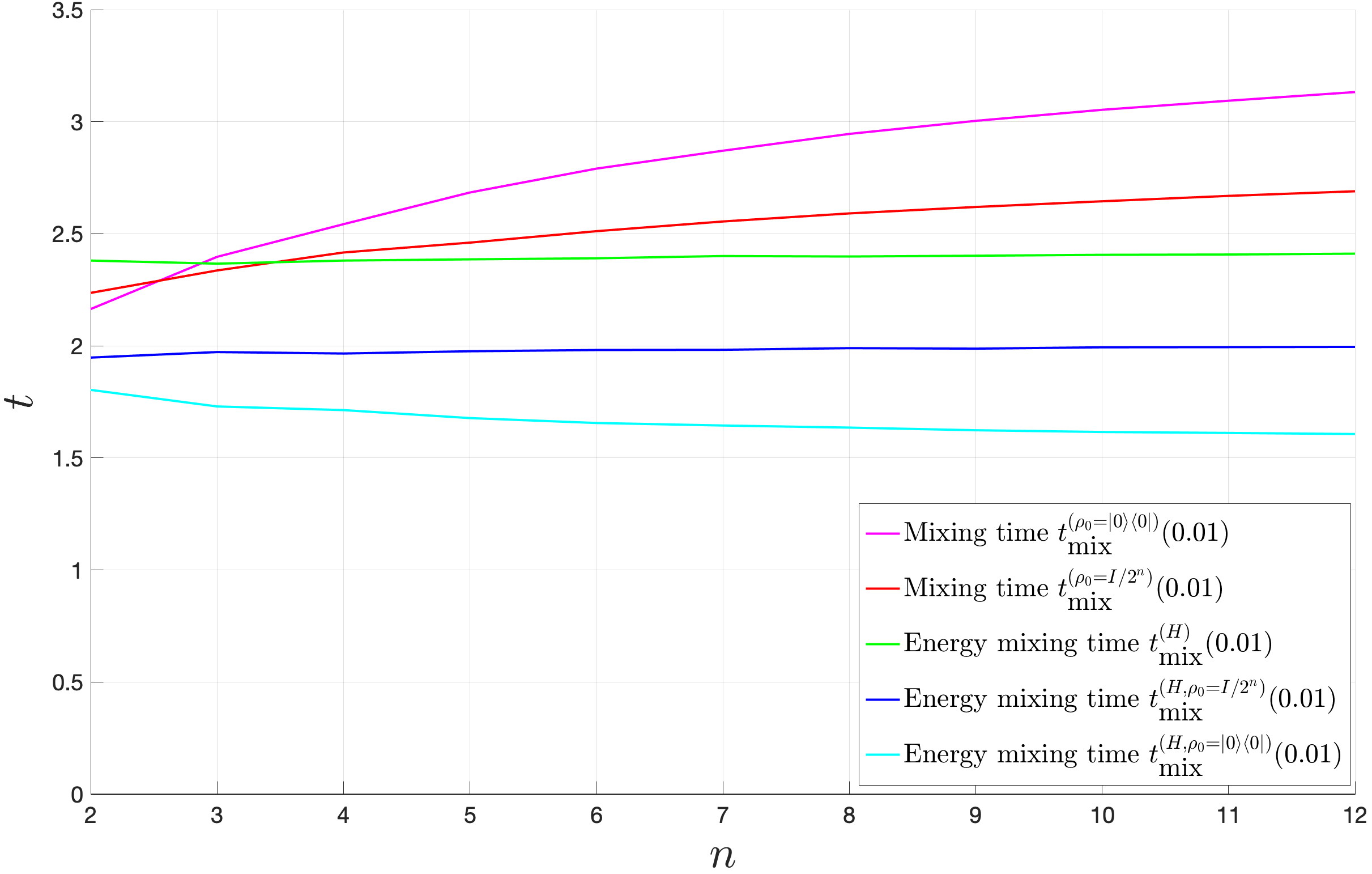}
    \caption{Comparing different mixing time notions for the 1D transverse-field Ising model with $J=0.5$, $h=1$, at $\beta =1$.}
    \label{fig:TFIM_mixing}
\end{figure}

\newpage
In fermionic settings, we look at the Fermi-Hubbard model given by
\begin{equation}
    H_\text{FH} = -t \sum_{\langle i,j\rangle, \sigma}( c_{i,\sigma}^\dagger c_{j,\sigma} + c_{j,\sigma}^\dagger c_{i,\sigma})+ U\sum_{i=1}^n N_{i,\uparrow}N_{i,\downarrow}\,,\label{eqn: Hamiltonian spinful FH}
\end{equation} as well as its spinless variant (also referred to as \textit{polarised}), given by
\begin{equation}
    H_\text{pFH} = -t \sum_{\langle i,j\rangle}( c_{i}^\dagger c_{j} + c_{j}^\dagger c_{i})+ U\sum_{\langle i,j\rangle}^n N_{i}N_{j}\,.\label{eqn: Hamiltonian spinless FH}
\end{equation}
For the Gibbs sampler, we similarly choose single-site Majorana jumps with the Gaussian filter.
The results for 1D lattices of the spinful version are shown on Figure \ref{fig:SpinfulFH_mixing}, while the spinless version is on Figure \ref{fig:SpinlessFH_mixing} (note that the spinless version of the model in 1D undergoes a phase transition at $U/t = 2$, while the spinful one doesn't have any critical points for $U/t > 0$). On both of these figures, we compare the mixing time $t_\textup{mix}^{(\rho_0 = I/2^n)}(0.01)$ to the corresponding energy mixing times $t_\textup{mix}^{(H,\rho_0 = I/2^n)}(0.01)$ as well as the energy mixing time for arbitrary initial state  $t_\textup{mix}^{(H)}(0.01)$. In the spinless case on Figure \ref{fig:SpinlessFH_mixing}, we can clearly see the energy-specific mixing times converging towards a constant value, while the state mixing time keeps on increasing logarithmically. For the spinful version on Figure \ref{fig:SpinfulFH_mixing}, this distinction is not so clear, as we can simulate only up to $n=6$ sites, which is not sufficient for convergence, but the energy-specific mixing times are increasing significantly slower than the state mixing time $t_\textup{mix}^{(\rho_0 = I/2^n)}(0.01)$.

The mixing time of an observable $O$ for an arbitrary initial state is defined via the contraction of its spectrum $\operatorname{spec}(O(t))$ around its steady value $\Tr(O\cdot \sigma)$. We explicitly demonstrate this on Figure \ref{fig:contractivity of spectrum} for the case of energy $O=H$ in the spinful Fermi-Hubbard model (Subfigure \ref{fig:SpinfulFH_spectral_range}) and the transverse-field Ising model (Subfigure \ref{fig:TFIM_spectral_range}) by plotting the highest and lowest eigenvalues of $H(t) = e^{t\mathcal{L}}[H]$ as they evolve in time, until the whole spectrum is contained around the steady value within an error $\epsilon = 0.01$ relative to the spectral norm $\|H\|$.

\begin{figure}[b]
    \centering
    \hspace{-1cm}\includegraphics[width=1.05\linewidth]{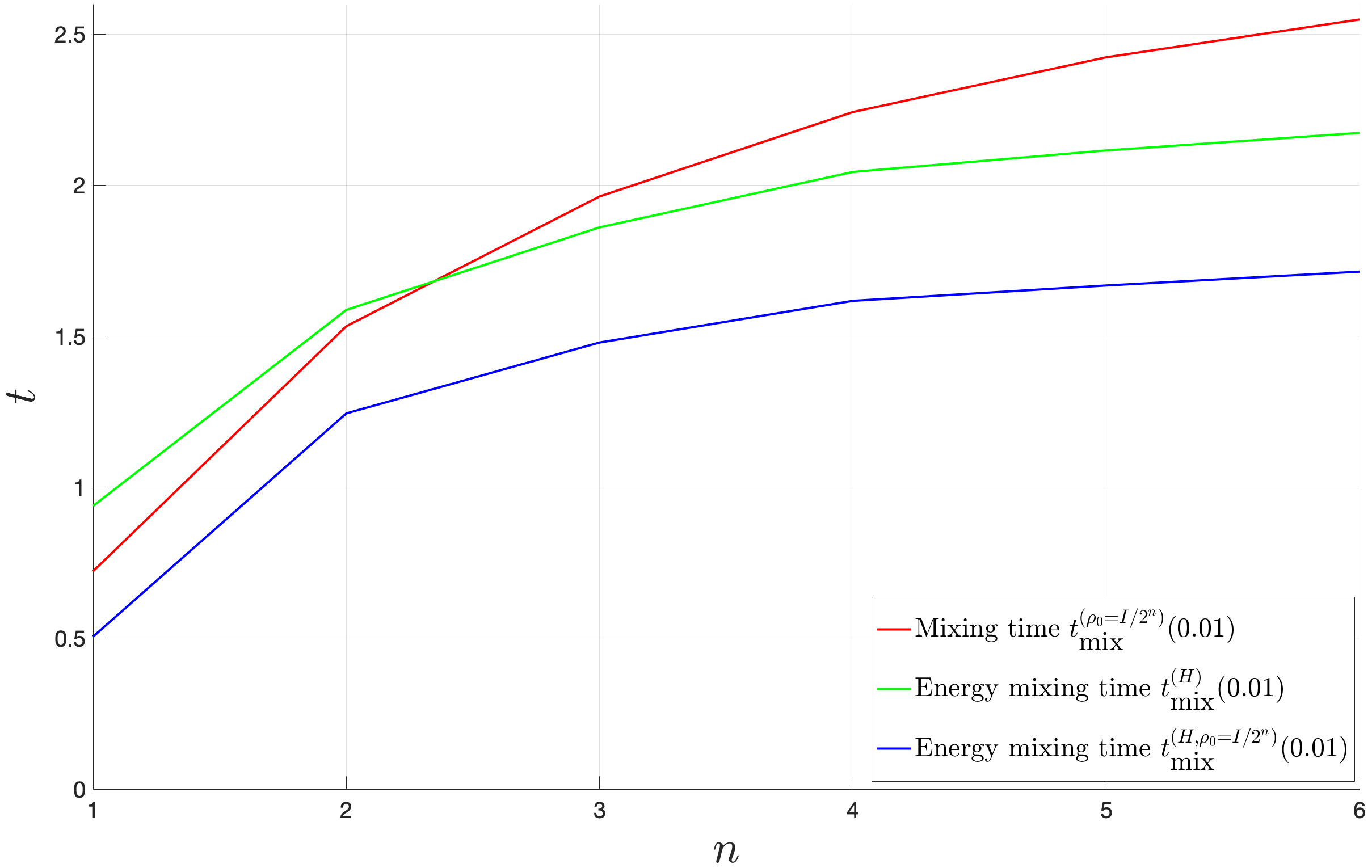}
    \caption{Comparing different mixing time notions for the spinful 1D Fermi-Hubbard model with $U=1$, $t=1$, at $\beta =1$.}
    \label{fig:SpinfulFH_mixing}
\end{figure}

\newpage

\begin{figure}[H]
    \centering
    \hspace{-1cm}\includegraphics[width=1.05\linewidth]{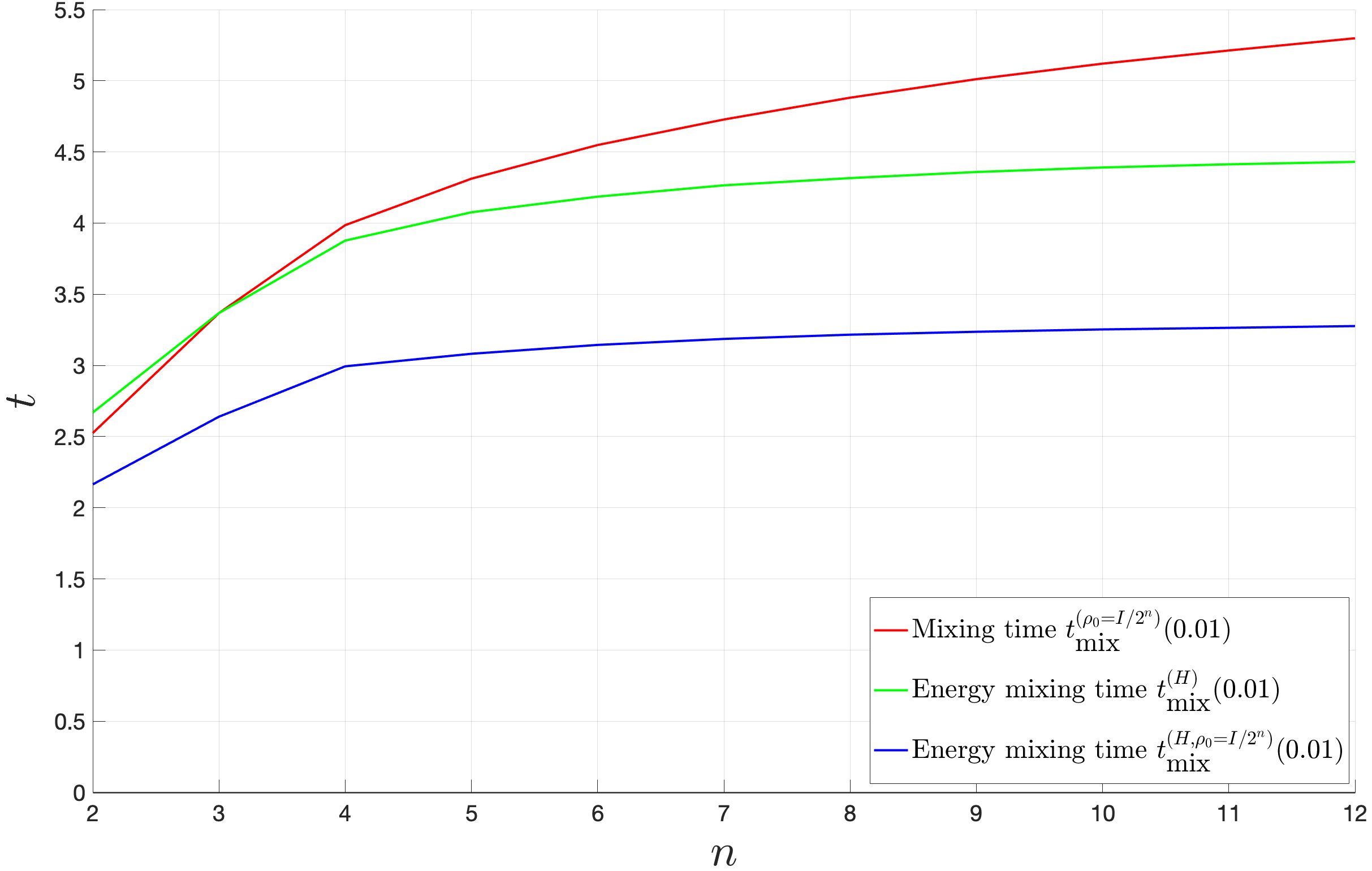}
    \caption{Comparing different mixing time notions for the spinless 1D Fermi-Hubbard model with $U=2$, $t=1$, at $\beta =1$.}
    \label{fig:SpinlessFH_mixing}
\end{figure}

\begin{figure}[H]
    \centering
    \begin{subfigure}[t]{0.5\linewidth}
        \centering
        \hspace{-1cm}\includegraphics[width=\linewidth]{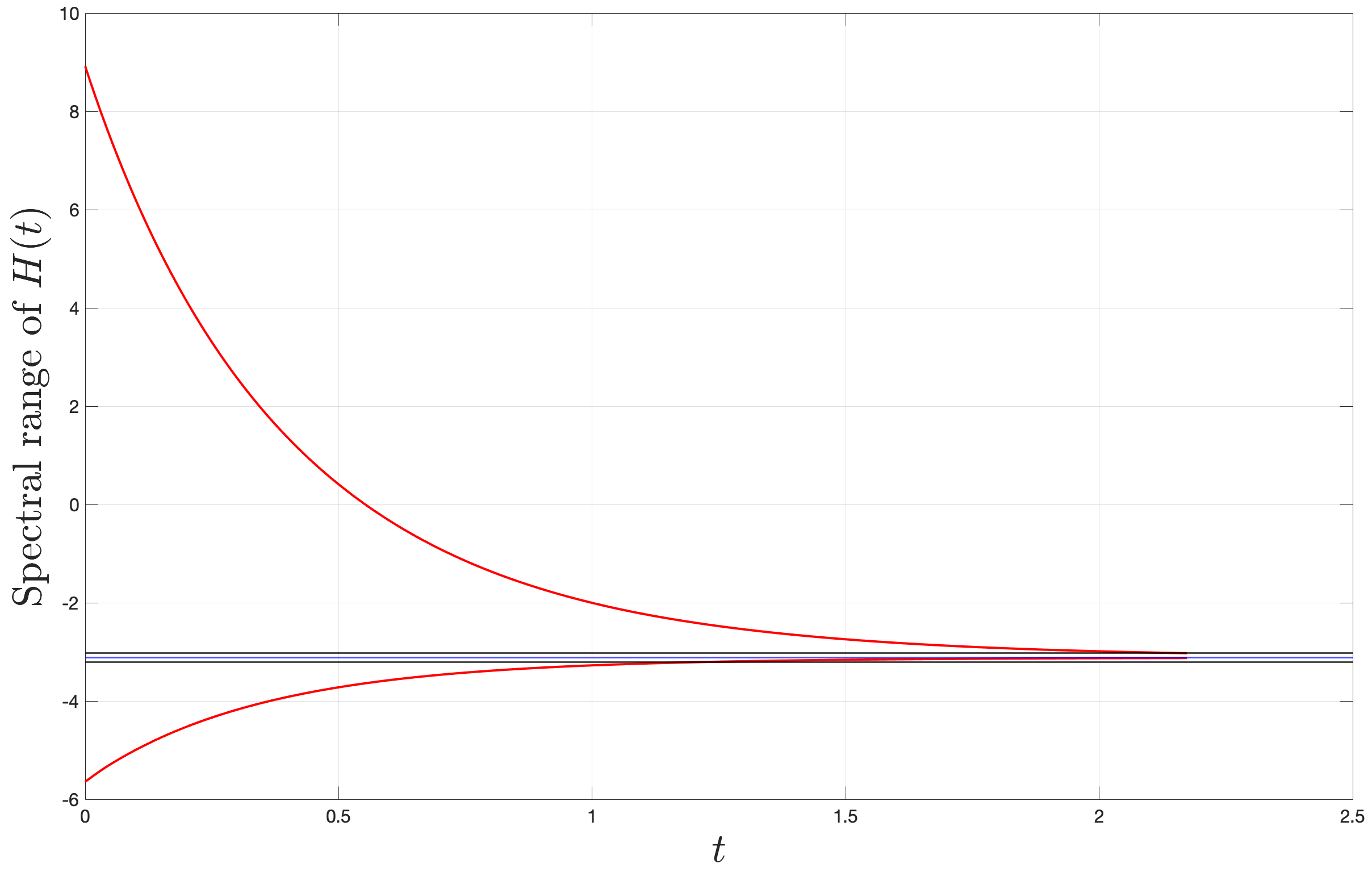}
    \caption{Spinful 1D FH model with $U=1$, $t=1$, at $\beta =1$, for $n=6$ sites.}\label{fig:SpinfulFH_spectral_range}
    \end{subfigure}%
    \begin{subfigure}[t]{0.5\linewidth}
        \centering
        \includegraphics[width=\linewidth]{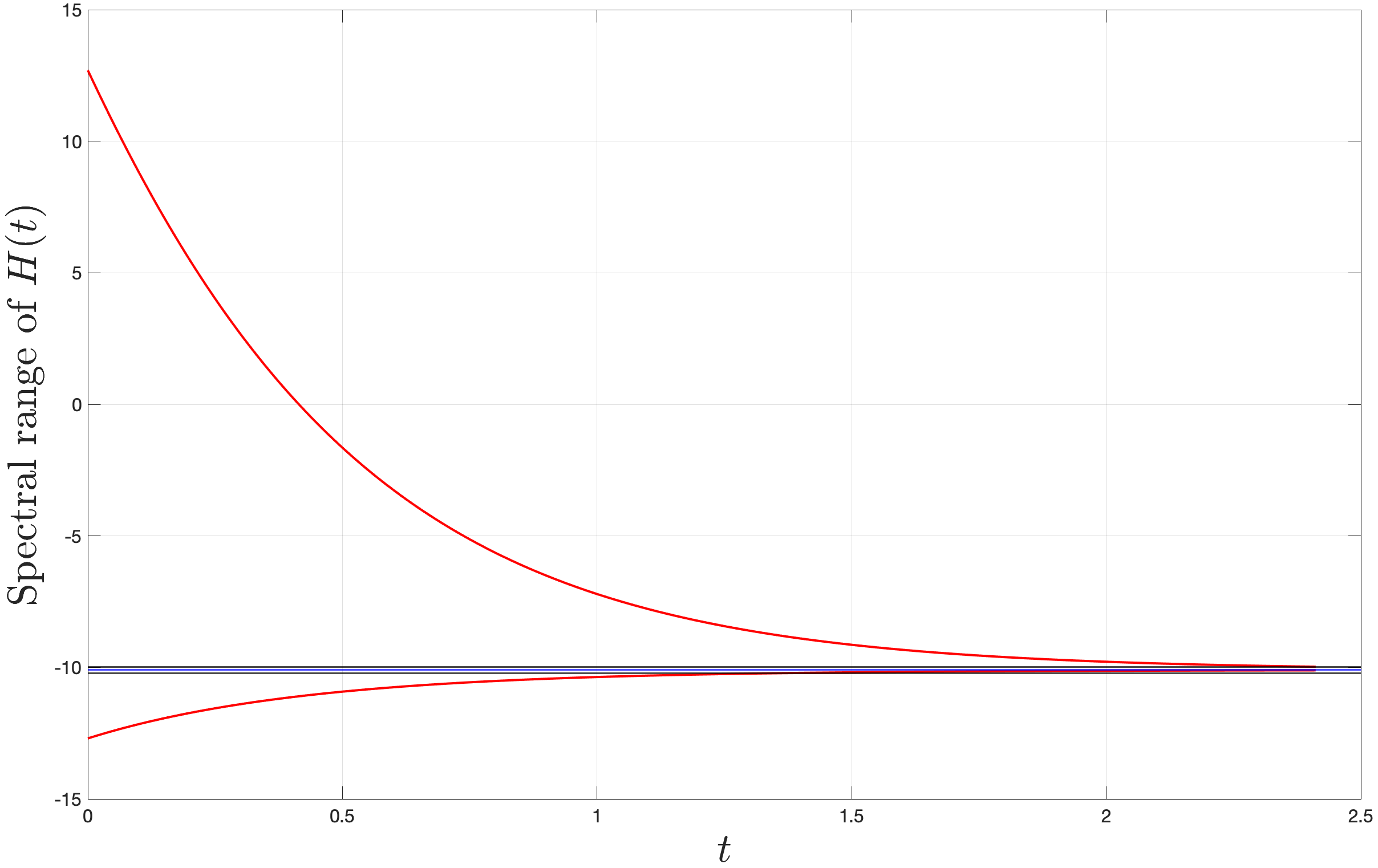}
    \caption{1D TFIM with $J=0.5$, $h=1$, at $\beta =1$, for $n=12$ sites.}\label{fig:TFIM_spectral_range}
    \end{subfigure}
    \caption{Contractivity of the spectral range of $H(t)$ around its expectation value $\Tr(H\cdot \sigma)$ in the steady state. The mixing time is defined by the whole spectrum being contained around the steady value within an error $\epsilon = 0.01$ relative to the spectral norm of $H$.}
    \label{fig:contractivity of spectrum}
\end{figure}

\acknowledgements
We acknowledge the use of Claude (series 5 models) when creating Figure \ref{fig:lightcone splitting}, when coming up with the observable in Example \ref{example: observable mixing in log time}, and for searching the literature on related works.
All authors acknowledge support from the EPSRC Grant number EP/W032643/1. MB acknowledges funding by the European Research Council (ERC Grant Agreement No.~948139) and the Excellence Cluster Matter and Light for Quantum Computing (ML4Q-2).

\newpage

\bibliography{bibliography}
\bibliographystyle{alphaurl}

\end{document}